\documentclass[runningheads]{llncs}
\usepackage{graphicx}
\usepackage{tikz-cd}
\usetikzlibrary{decorations.pathreplacing}
\usepackage{hyperref}

\usepackage{enumitem}

\newcommand{\abs}[1]{\left| #1 \right|}
\newcommand{\bigo}{\mathcal{O}}

\newcommand{\PP}{\textsc{Perfect Phylogeny}}
\newcommand{\TCG}{\textsc{Triangulating Colored Graphs}}
\newcommand{\TMG}{\textsc{Triangulating Multicolored Graphs}}
\newcommand{\MIS}{\textsc{Multicolor Independent Set}}
\newcommand{\TCMIS}{\textsc{Tree-chained Multicolor Independent Set}}

\newcommand{\prblm}[5]
{
    \noindent\fbox{\begin{minipage}{\dimexpr\linewidth-2\fboxsep-2\fboxrule}
        \textsc{#1} (#2)\\
        \textbf{Input:} #3\\
        \textbf{Parameter:} #4\\
        \textbf{Question:} #5
    \end{minipage}}
}

\begin{document}
\title{On the Parameterized Complexity of the Perfect Phylogeny Problem}
%
%
\author{Jorke M. de Vlas}
\authorrunning{J.M. de Vlas}
%
\institute{Utrecht University, the Netherlands; Link\"oping Universitet, Sweden}
\maketitle              
\begin{abstract}
This paper categorizes the parameterized complexity of the algorithmic problems \PP{} and \TCG{} when parameterized by the number of genes and colors, respectively. We show that they are complete for the parameterized complexity class XALP using a reduction from \TCMIS{} and a proof of membership. We introduce the problem \TMG{} as a stepping stone and prove XALP-completeness for this problem as well. We also show that, assuming the Exponential Time Hypothesis, there exists no algorithm that solves any of these problems in time $f(k)n^{o(k)}$, where $n$ is the input size, $k$ the parameter, and $f$ any computable function.

\keywords{Perfect phylogeny \and Triangulated graphs \and XALP \and Parameterized complexity \and W-hierarchy.}

\subsubsection*{Acknowledgements.}
This paper was written as a master thesis at Utrecht University. I wish to thank my supervisors Hans L. Bodlaender and Carla Groenland for the discussions and guidance.
\end{abstract}


\section{Introduction}
\label{sec:intro}
A phylogeny is a tree that describes the evolution history of a set $S$ of species. Every vertex corresponds to a species: leafs correspond to species from $S$, and internal vertices correspond to hypothetical ancestral species. Species are characterized by their gene-variants, and the quality of a phylogeny is determined by how well it represents those variants. In particular, a phylogeny is \emph{perfect} if each gene-variant was introduced at exactly one point in the tree. That is, the subset of vertices that contain the variant is connected. \PP{} is the algorithmic problem of determining the existence of a perfect evolutionary tree. It has large implications on determining the evolutionary history of genetic sequences and is therefore of major importance. This application is not limited to biology: it can also be used to determine the history of languages or cultures.

The concept of phylogenies as an algorithmic problem has been well researched since the 60s. The first formal definition of \PP{} was given by Estabrook~\cite{EJM_76}. In 1974, Buneman showed that the problem can be reduced to the more combinatorial \TCG{}~\cite{B_74} which by itself has also become an important, well-studied problem. An inverse reduction, and thus equivalence, was given by Kannan and Warnow~\cite{KW_90}. In 1992, Bodlaender et al showed that \PP{} is NP-complete~\cite{BFW_92}.

After Downey and Fellows introduced parameterized complexity~\cite{DF_95}, people have tried to determine the complexity of \PP{} when seen as a parameterized problem. There are two main ways to parameterize the problem: either by using the number of genes or by using the maximum number of variants for each gene. In the second case, the problem becomes FPT~\cite{KW_97}. In the first case, the parameterized complexity was unknown. There are some partial results: On one hand, it was shown that the problem is $W[t]$-hard for every $t$~\cite{BFH+_00}. On the other hand, there exists an algorithm that runs in $\bigo(n^{k+1})$ time and space (where $n$ is the input size and $k$ the parameter) which implies that the problem is contained in XP~\cite{MWW_94}.

In this paper we will close this gap and show that \PP{} is complete for the complexity class XALP, which is a relatively new parameterized complexity class that was introduced by Bodlaender et al in~\cite{BGJ+_22}. We will show XALP-completeness by giving a reduction from the XALP-complete problem \TCMIS{}, using \TMG{} as a stepping stone. This makes \PP{} the first example of a ``natural'' problem that is XALP-complete and allows it to be used as a starting point for many other XALP-hardness proofs. Finally, we use the same reduction to give some lower bounds dependent on the Exponential Time Hypothesis.


\section{Definitions and Preliminary Results}
\label{sec:prelims}

All problems in this paper are parameterized. This means that the input contains a parameter separate from the rest of the input which allows us to analyze the runtime as a function of both the input and the parameter. If a parameterized problem with input size $n$ and parameter $k$ can be solved in $\bigo(f(k)n^c)$ time (with $f$ any computable function and $c$ any constant), we say that it is Fixed Parameter Tractable (FPT). A \emph{parameterized reduction} is an algorithm that transforms instances of one parameterized problem into instances of another parameterized problem, runs in FPT time, and whose new parameter is only dependent on the old parameter. A \emph{log-space reduction} is a parameterized reduction that additionally only uses $\bigo(f(k)\log(n))$ space. These reductions form the base of all parameterized complexity classes: all classes are defined up to equivalence under one of these reductions.

We use the following definition of \PP{}, which is a parameterized version of the original definition from Estabrook~\cite{EJM_76}.

\prblm{Perfect Phylogeny}{PP}{A set $G$ of genes, for each gene $g \in G$ a set $V_g$ of variants, and a set $S$ of species, where each species is defined as a tuple of gene-variants (exactly one per gene)}{The number of genes}{Does there exist a tree $T$ of species (not necessarily from $S$) that contains all species from $S$ and where the subtree of species containing a specific gene-variant is connected?}

\subsubsection{Triangulated and Colored Graphs.}

A graph is \emph{colored} if every vertex is assigned a color. The graph is \emph{properly colored} if there are no edges between vertices of the same color. For any cycle $C$ in a graph, a \emph{chord} is an edge between two vertices of $C$ that are not neighbors on $C$. A graph is \emph{triangulated} if every cycle of length at least four contains a chord. A \emph{triangulation} of a graph is a supergraph that is triangulated. We now define the problem \TCG, which was first given by Buneman~\cite{B_74}.

\prblm{Triangulating Colored Graphs}{TCG}{A colored graph $G$}{The number of colors used}{Does there exist a properly colored triangulation of $G$?}

We now introduce a multicolored variant of this problem. A graph is \emph{multicolored} if every vertex is assigned a (possibly empty) set of colors. The graph is \emph{properly multicolored} if there are no edges between vertices which share a color. This gives us the following problem:

\prblm{Triangulating Multicolored Graphs}{TMG}{A multicolored graph $G$}{The number of colors used}{Does there exist a properly multicolored triangulation of $G$?}

This problem is equivalent to \TCG{} under parameterized reductions. The general idea is to replace every multicolored vertex with a clique of normally colored vertices. A full proof is given in appendix \autoref{sec:multicolor}. We now define a tree decomposition and state some well-known properties of triangulated colored graphs.
\begin{definition}[Tree Decomposition]\label{def:treedecomp}
    Given a graph $G = (V,E)$, a tree decomposition is a tree $T$ where each vertex (bag) is associated with a subset of vertices from $T$. This tree must satisfy three conditions:
    \begin{itemize}
        \item For each vertex $v \in V$, there is at least one bag that contains $v$.
        \item For each edge $e \in V$, there is at least one bag that contains both endpoints of $e$.
        \item For each vertex $v \in V$, the subgraph of bags that contain $v$ is connected.
    \end{itemize}
\end{definition}

\begin{proposition}\label{prop:triang}
    Let $G$ be a (multi)colored graph and $C$ be a cycle.
    \begin{enumerate}[label=(\roman*)]
        \item\label{prop:triang:alternate} Suppose there exist two colors such that every vertex from $C$ is colored with at least one of these colors. Then $G$ admits no properly (multi)colored triangulation.
        \item\label{prop:triang:chord} Let $v$ be any vertex from $C$. In every triangulation of $G$ there is either an edge between $v$'s neighbors (in $C$) or a chord between $v$ and some non-neighbor vertex from $C$.
        \item\label{prop:triang:treedecomp} $G$ admits a properly colored triangulation if and only if $G$ admits a tree decomposition where each bag contains each color at most once.
    \end{enumerate}
\end{proposition}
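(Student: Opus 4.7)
For part (i), the first step is to show that under the stated hypotheses the cycle $C$ is forced to alternate strictly between the two colors. Writing $a$ and $b$ for these colors, any vertex of $C$ colored with both $a$ and $b$ would have no admissible neighbor on $C$ (such a neighbor could use neither $a$ nor $b$, yet must use at least one of them by assumption); hence each vertex of $C$ uses exactly one of $a, b$, and consecutive vertices alternate, so $|C|$ is even and at least $4$. I would then induct on $|C|$. In the base case $|C|=4$ the only two possible chords both join same-colored vertices and are thus forbidden in any properly (multi)colored supergraph, so no valid triangulation exists. For the inductive step, any admissible chord of $C$ must go between an $a$-vertex and a $b$-vertex, i.e., between positions of opposite parity on $C$, and therefore splits $C$ into two alternating sub-cycles, both of even length at least $4$; the inductive hypothesis applies to either.

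For part (ii), the plan is to view the chords of $C$ inside a triangulation $H$ as a polygonal triangulation of $C$. Repeatedly applying the definition of ``triangulated'' to $C$ and to its sub-cycles shows that the chords of $C$ present in $H$ subdivide the cycle into triangles (in the planar polygon sense), and the designated vertex $v$ lies in at least one such triangle. If the unique triangle at $v$ is the one formed by $v$ together with its two $C$-neighbors $u, w$, then $uw$ must be an edge of $H$, giving the first alternative. Otherwise $v$ belongs to a triangle with some third vertex $x \in V(C) \setminus \{u, v, w\}$, so $vx$ is the desired chord from $v$ to a $C$-non-neighbor.

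For part (iii), my approach is to invoke the classical correspondence between triangulations of a graph and tree decompositions whose bags are cliques. For the forward direction, given a properly (multi)colored triangulation $H$ of $G$, take a clique tree of $H$ as the tree decomposition; each maximal clique of $H$ uses each color at most once because adjacent vertices in a properly multicolored graph share no color. For the reverse direction, given a tree decomposition of $G$ satisfying the color condition, completing each bag to a clique is well known to produce a triangulated supergraph of $G$, and this supergraph remains properly multicolored because every pair of adjacent vertices in it lies in a common bag and hence shares no color. The main technical subtlety of the whole proof lies in the opening step of part (i): one must translate the hypothesis ``every vertex of $C$ uses at least one of two colors'' into the stronger statement that $C$ alternates with exactly one color per vertex before the inductive argument can get off the ground; parts (ii) and (iii) then reduce to standard triangulation arguments.
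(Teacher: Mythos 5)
Your proposal is correct and follows essentially the same route as the paper: parts (i) and (ii) are the same induction on cycle length driven by the existence of a chord (the paper skips your preliminary reduction to an alternating cycle by using a pigeonhole base case at length~3, and phrases (ii) as a direct contradiction argument rather than via a polygon triangulation, but the content is the same), while part (iii) spells out the standard clique-tree correspondence that the paper simply cites to Buneman and Gavril. One small imprecision: in (ii) the full set of chords of $C$ present in the triangulation need not itself form a planar polygon triangulation (crossing chords may coexist), so you should say that \emph{some subset} of these chords forms one, which is exactly what your recursive application of chordality produces.
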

\begin{proof}
    Omitted from main text. See appendix \autoref{sec:proofs}.
    \qed
\end{proof}

\subsubsection{XALP.}
A new complexity class in parameterized complexity theory is XALP~\cite{BGJ+_22}. Intuitively, it is the natural home of parameterized problems that are $W[t]$-hard for every $t$ and contain some hidden tree-structure. For \PP{}, this tree-structure is the required phylogeny. For \TCG{}, it is the tree decomposition arising from \autoref{prop:triang}\ref{prop:triang:treedecomp}.

Formally, XALP is the class of parameterized problems that are solvable on an alternating Turing machine using $\bigo(f(k) \log(n))$ memory and at most $\bigo(f(k) + \log(n))$ co-nondetermenistic computation steps, where $n$ is the input size and $k$ is the parameter. It is closed under log-space reductions. On Downey and Fellows' $W$-hierarchy, it lies between $W[t]$ and XP: XALP-hardness implies $W[t]$-hardness for every $t$ and XALP membership implies XP-membership.

An example of an XALP-complete problem is \TCMIS{}~\cite{BGJ+_22}. It is defined as a tree-chained variant of the well-known \MIS{} problem.

\prblm{Multicolor Independent Set}{MIS}{A colored graph $G$}{The number of colors used}{Does $G$ contain an independent set consisting of exactly one vertex of each color?}

\prblm{Tree-Chained Multicolor Independent Set}{TCMIS}{A binary tree $T$, for each vertex (bag) $B \in T$ a colored graph $G_B = (V_B,E_B)$ which we view as an instance of \MIS{}, and for each edge $e \in T$ a set of extra edges $E_e$ between the graphs corresponding to the endpoints of $e$.}{The maximum number of colors used in each instance of MIS}{Does there exist a solution to each instance of MIS such that for each of the extra edges at most one of the endpoints is contained in the solution?}


\section{Main Results}\label{sec:main}
In this section we will state the main result and explore some of its corollaries. We postpone the proof to the next sections.

\begin{theorem}\label{thm:xalpmember}
    \TCG{} is contained in XALP.
\end{theorem}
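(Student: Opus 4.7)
The plan is to invoke \autoref{prop:triang}\ref{prop:triang:treedecomp} and rephrase TCG as the question of whether $G$ admits a tree decomposition in which each color appears at most once per bag, so that every bag contains at most $k$ vertices. Membership then reduces to designing an alternating Turing machine $M$ that existentially guesses such a decomposition $T$ in an on-the-fly manner, keeping only a single bag in memory at a time.

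Concretely, $M$ would store the current bag $B$ using $O(k \log n)$ bits (up to $k$ vertex indices). Starting from an existentially guessed root bag, $M$ processes $T$ in nice-TD style: at each node it existentially guesses the node type (leaf, introduce-vertex, forget-vertex, introduce-edge, or join) and updates $B$ accordingly. Local checks enforce that the proper-colouring invariant is preserved along every transition, that each introduced edge truly lies in $E(G)$, and that each vertex is introduced/forgotten consistently. At join nodes $M$ uses its co-nondeterministic power to universally branch into one of the two children while carrying the same bag $B$ into each. To enforce coverage, an outer universal quantification over $V(G) \cup E(G)$ (costing only $O(\log n)$ universal bits to name a vertex or an edge) together with an existential descent through $T$ witnesses that every vertex and every edge eventually appears in some bag along the guessed decomposition.

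The main obstacle is keeping the total number of co-nondeterministic transitions within the $O(f(k) + \log n)$ bound: each universal branch at a join node contributes to this count, so the guessed decomposition must have depth $O(\log n)$. I would resolve this with a balancing argument in the spirit of Bodlaender and Hagerup: if a valid TD of the required form exists at all, then one can be obtained of depth $O(\log n)$ whose bags are unions of $O(1)$ bags of the original decomposition. Such ``super-bags'' still contain only $O(k)$ vertices, so the memory bound is preserved; the proper-colouring refinement inside each super-bag can be guessed existentially when verifying local consistency. Combining these ingredients yields an ATM running in $O(k\log n)$ space with $O(f(k)+\log n)$ co-nondeterministic steps, placing TCG in XALP.
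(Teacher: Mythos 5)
Your starting point is the same as the paper's: apply \autoref{prop:triang}\ref{prop:triang:treedecomp} and build an alternating Turing machine that holds only one bag of the decomposition (at most $k$ vertices, $\bigo(k\log n)$ bits) in memory at a time. From there, however, your construction diverges and has a real gap in the coverage check. You propose to universally quantify over $V(G)\cup E(G)$ and then existentially descend through the guessed decomposition to a bag that covers the chosen object. Because the decomposition is too large to fit in $\bigo(f(k)\log n)$ space, it cannot be fully committed before the universal branch; the existential guesses that describe the tree come \emph{after} the universal choice of which element to cover. As a result, different universal branches are free to certify different (and mutually inconsistent) partial decompositions. This proves only ``for each $x$ there is a decomposition covering $x$,'' not ``there is one decomposition covering all $x$,'' so the machine may accept non-triangulatable instances. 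Similarly, at your join nodes nothing ties down which vertices of $G$ must be handled on which side of the join.

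The paper sidesteps both issues by basing the ATM on the McMorris--Warnow--Wimer XP algorithm: after fixing a bag $S$ (with, by \autoref{lemma:exact}, exactly one vertex of each color), the machine computes the connected components of $G\setminus S$ and co-nondeterministically branches into each one, guessing in each branch how the bag changes. Coverage is then automatic, since every remaining vertex lies in exactly one component and is eventually reached, and the split at a ``join'' is dictated by the graph itself rather than by a guess. Your proposed Bodlaender--Hagerup balancing step is also both unnecessary and delicate here: the paper simply bounds the total number of universal branchings by $\bigo(n)$ (each subtree introduces a fresh vertex), which suffices for the tree-size-bounded ATM characterization it is using; and unions of $\bigo(1)$ bags can easily contain the same color multiple times, so ``super-bags'' do not directly satisfy the TCG constraint and the claimed local refinement would need a substantially more careful correctness argument than your sketch provides.
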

\begin{theorem}\label{thm:xalphard}
    There exists a log-space reduction from \TCMIS{} to \TMG{}. This reduction has a linear change of parameter.
\end{theorem}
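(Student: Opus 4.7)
The plan is to exploit \autoref{prop:triang}\ref{prop:triang:treedecomp} and reduce \TCMIS{} to the problem of building a tree decomposition of a multicolored graph in which every color appears at most once per bag. From a \TCMIS{} instance with binary tree $T$, per-bag MIS-graphs $\{G_B\}$ over a common palette $\{1,\ldots,k\}$, and per-edge cross edges $\{E_e\}$, I would construct in log-space a multicolored graph $H$ whose palette is the $k$ MIS-colors together with a constant number of structural colors, giving a linear change of parameter. The intended correspondence is that each bag $B\in T$ is realized by a designated decomposition bag of $H$ which contains a structural gadget $K_B$ and exactly one chosen MIS-vertex of each color; the shape of the whole decomposition is intended to mirror $T$.

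For each $B\in T$ the gadget $H_B$ consists of a small clique $K_B$ on the structural colors (so that $K_B$ is forced into a single decomposition bag) together with, for every MIS-vertex $v$ of color $c$ in $G_B$, a vertex $v_B$ of color $c$ adjacent to $K_B$. To enforce ``at most one $v_B$ of each color in the $K_B$-bag,'' I attach to each pair of same-colored MIS-vertices a short cycle through $K_B$ whose coloring alternates between the two offending colors, so that by \autoref{prop:triang}\ref{prop:triang:alternate} the cycle is untriangulatable unless those two vertices end up in separate bags. For each edge of $G_B$, a similarly designed cycle encodes the independence requirement as a forbidden coexistence, and \autoref{prop:triang}\ref{prop:triang:chord} is used to argue which chords can or cannot appear.

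For each tree edge $e=(B_1,B_2)$ I would add a bridging gadget sharing structural vertices with both $K_{B_1}$ and $K_{B_2}$, forcing their decomposition bags to be adjacent and thus pinning the shape of the decomposition to $T$. For each cross edge $(u,v)\in E_e$ I would attach a blocking cycle threading through this bridge so that simultaneously selecting $u_{B_1}$ and $v_{B_2}$ violates \autoref{prop:triang}\ref{prop:triang:alternate}. The forward direction then reads off a \TCMIS{}-solution by inspecting the MIS-colored contents of each $K_B$-bag; the backward direction builds a color-proper tree decomposition shaped like $T$ whose bag at $B$ contains $K_B$ together with the chosen independent set of $G_B$, glued along the bridge gadgets.

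I expect the main obstacle to be the simultaneous design of the three classes of blocking cycles---same-color exclusion, MIS-edge exclusion, and cross-edge exclusion---so that each is untriangulatable in exactly its unwanted configuration while reusing only $\bigo(1)$ structural colors, and so that no spurious obstruction arises from the many such cycles sharing $K_B$. A secondary subtlety is rigorously forcing the decomposition's shape to match $T$ rather than some contracted or re-rooted variant; I expect to handle this by giving $K_B$ a distinguished structural coloring and ensuring the bridges carry just enough of this coloring to pin down the correct adjacencies. Log-space and the $k+\bigo(1)$ parameter bound then follow from the local, index-driven nature of the construction.
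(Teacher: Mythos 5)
There is a genuine gap, and it sits at the heart of the construction: your gadget never actually forces a \emph{choice}. In your design each MIS-vertex $v$ of color $c$ becomes a single vertex $v_B$ of color $c$ adjacent to the clique $K_B$, and you intend to read the selected independent set off the contents of the $K_B$-bag. But the only constraint a proper multicolored triangulation imposes is that no bag holds two vertices of the same color; nothing forces \emph{exactly one} $v_B$ of each color into the $K_B$-bag. Since each $v_B\cup K_B$ is a clique, by \autoref{lemma:full} each such pair must share a bag, but these can simply be distinct leaf bags hanging off the $K_B$-bag, one per MIS-vertex, and the instance triangulates trivially with no selection made. The paper's solution to exactly this problem is the zipper gadget: a cycle-shaped structure whose proper triangulations are in bijection with $\{0,\dots,r\}$ (\autoref{prop:zipper-gadget}), so that the offset of the unique way the two paths ``lock together'' encodes the chosen vertex index. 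You need some gadget with this ``exactly $r+1$ triangulations'' property; without it the forward direction of your reduction cannot be started.

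A second, related problem is your blocking mechanism. \autoref{prop:triang}\ref{prop:triang:alternate} says a cycle using only two colors is \emph{unconditionally} untriangulatable, so you cannot attach such a cycle and hope it becomes an obstruction ``only when those two vertices end up in the same bag'' --- if the cycle is present and two-colored, the whole instance is a NO instance regardless of any choices elsewhere. The obstruction must be \emph{conditional} on the choices made, which is what the paper achieves by merging one vertex of the two gadgets' $P$-paths and placing a single shared color $d$ on offset-dependent positions of the $Q$-paths: only when both gadgets are triangulated at the forbidden pair of offsets does the geometry (via \autoref{lemma:bagpath} and \autoref{lemma:vertexpath}) force two $d$-colored vertices into one bag. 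Finally, reusing the $k$ MIS-colors directly across all bag-gadgets (your $k+\bigo(1)$ palette) would let gadgets of adjacent tree nodes interfere through the shared structural vertices; the paper avoids this with $7k$ disjoint color sets assigned greedily by distance from the root, still a linear change of parameter. Your overall strategy of pinning the decomposition shape to $T$ is also stronger than needed --- the paper only routes each gadget through a node, its parent, and its grandparent so that cross-edge conflicts can be tested locally --- but that is a matter of economy rather than correctness.
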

We will prove \autoref{thm:xalpmember} in \autoref{sec:membership}. For \autoref{thm:xalphard}, we describe the reduction in \autoref{sec:hardness} and prove correctness of this reduction in appendix \autoref{sec:fullproof}.
\begin{theorem}[Main Result]
    The problems \PP{}, \TCG{} and \TMG{} are all XALP-complete.
\end{theorem}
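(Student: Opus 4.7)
The plan is to assemble the statement from the two theorems just announced together with three classical equivalences: those between \PP{} and \TCG{} due to Buneman~\cite{B_74} and Kannan--Warnow~\cite{KW_90}, and the one between \TMG{} and \TCG{} proved in appendix~\autoref{sec:multicolor}.

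For XALP-membership I would start from \autoref{thm:xalpmember}, which places \TCG{} in XALP. Since XALP is closed under log-space reductions, I would check that the appendix construction---replacing each multicolored vertex by a clique of singly-colored vertices---is implementable in logarithmic space, which places \TMG{} in XALP. Finally I would apply Buneman's reduction from \PP{} to \TCG{}, which sends each gene to a color and each species to a vertex so that the parameter changes only linearly, and again verify that it is realizable in logarithmic space; this places \PP{} in XALP.

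For XALP-hardness I would run the analogous chain starting from \TCMIS{}, known to be XALP-complete~\cite{BGJ+_22}. \autoref{thm:xalphard} gives a log-space reduction with a linear change of parameter into \TMG{}, establishing XALP-hardness of \TMG{}. Composing with the direction \TMG{} to \TCG{} of the appendix equivalence transfers hardness to \TCG{}, and composing further with the Kannan--Warnow reduction from \TCG{} to \PP{}---which produces one gene per input color---transfers hardness to \PP{}. Closure of log-space parameterized reductions under composition then yields XALP-completeness of all three problems.

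The main obstacle is not the high-level diagram chase but rather the side condition that each of the classical reductions involved be realizable by a log-space transducer, rather than merely by an FPT algorithm. The constructions of Buneman, Kannan--Warnow, and the appendix are syntactically simple---essentially relabelings together with a bounded amount of bookkeeping per vertex or color---so I expect each to admit a straightforward log-space implementation that enumerates vertices and colors on the fly. Once that check is in place, the theorem follows immediately from the ingredients above.
\qed
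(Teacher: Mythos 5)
Your proposal is correct and follows essentially the same approach as the paper, which simply states that the result is obtained by combining \autoref{thm:xalpmember}, \autoref{thm:xalphard}, and the equivalences between the three problems; you have spelled out the direction of each reduction in the membership and hardness chains exactly as intended. You also helpfully flag (and the paper leaves implicit) the side condition that the Buneman, Kannan--Warnow, and appendix reductions must be log-space with bounded parameter change --- a point worth making since XALP is closed under log-space reductions specifically.
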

\begin{proof}
    Combine \autoref{thm:xalpmember}, \autoref{thm:xalphard} and the equivalences between these three problems.
    \qed
\end{proof}

We now use these complexity results to show some lower bounds on the space and time usage of \PP{}. In the remainder of this section, let $n$ be the input size, $k$ the parameter, $f$ any computable function and $c$ any constant. We start with a bound on the runtime based on the Exponential Time Hypothesis.
\begin{proposition}
    Assuming ETH, the problems \PP{}, \TCG{} and \TMG{} cannot be solved in $f(k)n^{o(k)}$ time.
\end{proposition}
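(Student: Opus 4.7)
The plan is to chain a standard ETH lower bound for \MIS{} through the reduction of \autoref{thm:xalphard} and then through the equivalences between \TMG{}, \TCG{}, and \PP{}.

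First I would recall the classical result that, assuming ETH, \MIS{} admits no algorithm running in time $f(k)n^{o(k)}$; this is the well-known $n^{o(k)}$ lower bound for $k$-\textsc{Clique}, transferred to the multicolor version in the standard way. Observing that \MIS{} is precisely the special case of \TCMIS{} in which the binary tree $T$ is a single bag (so all extra-edge sets are empty), the same lower bound immediately transfers to \TCMIS{}.

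Next I would argue by contradiction. Suppose \TMG{} could be solved in $f(k')(n')^{o(k')}$ time. Applying the log-space reduction of \autoref{thm:xalphard}, any \TCMIS{} instance of size $n$ and parameter $k$ becomes a \TMG{} instance of size $n' = n^{\bigo(1)}$ with parameter $k' = \bigo(k)$. Composing the reduction with the hypothetical algorithm yields a \TCMIS{} algorithm running in $g(k)\,n^{o(k)}$ time, contradicting the ETH lower bound just established.

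Finally, to extend the bound from \TMG{} to \TCG{} and \PP{}, I would appeal to the equivalences recalled in \autoref{sec:prelims}: the multicolor-to-color construction of appendix \autoref{sec:multicolor}, together with the classical Buneman and Kannan--Warnow reductions between \TCG{} and \PP{}. Each of these is a parameterized reduction with a linear (in fact constant-factor) change of parameter and polynomial instance-size blowup, so the same composition argument rules out $f(k)n^{o(k)}$ algorithms for \TCG{} and \PP{} as well.

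The only genuinely delicate point is the arithmetic on the parameter. A \emph{linear} blowup in $k$ and a polynomial blowup in $n$ preserve the $o(k)$ exponent after substitution, whereas anything superlinear in $k$ would degrade the exponent to $o(k^{1+\varepsilon})$ and fail to contradict the \TCMIS{} lower bound. The explicit linearity guarantee in \autoref{thm:xalphard}, together with the constant-factor parameter behaviour of the classical equivalences, is exactly what makes the chain of reductions tight enough to go through.
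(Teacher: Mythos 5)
Your argument matches the paper's own proof essentially step for step: start from the ETH lower bound for \MIS{}, view it as a single-bag instance of \TCMIS{}, push it through the reduction of \autoref{thm:xalphard} using its linear parameter blowup to reach \TMG{}, and finish via the \TMG{}/\TCG{}/\PP{} equivalences. Your closing remark about why linearity in $k$ (and only polynomial blowup in $n$) is needed to preserve the $o(k)$ exponent is a correct observation that the paper leaves implicit.
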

\begin{proof}
    We use as a starting point that, assuming ETH, the problem \MIS{} cannot be solved in $f(k)n^{o(k)}$ time~\cite{CCF+_04}. A trivial reduction to \TCMIS{} using a single-vertex tree then shows the same for that problem. Since the reduction given in \autoref{thm:xalphard} has a linear change in parameter we obtain the same lower bound for \TMG. Finally, using the equivalences proven in \autoref{sec:multicolor} and the known equivalences between TCG and PP (all with no change in parameter), the result follows.
    \qed
\end{proof}

We now bound the space usage based on the Slice-wise Polynomial Space Conjecture (SPSC). This conjectures that \textsc{Longest Common Subsequence} cannot be solved in both $n^{f(k)}$ time and $f(k)n^c$ space~\cite{PW_18}.
\begin{corollary}
    Assuming SPSC, the problems \PP{}, \TCG{} or \TMG{} cannot be solved in both $n^{f(k)}$ time and $f(k)n^c$ space.
\end{corollary}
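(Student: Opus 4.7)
The plan is to derive the corollary as a standard contradiction argument built on the XALP-completeness of \TMG{} established in the main theorem, exactly mirroring the ETH-based bound in the preceding proposition. Suppose for contradiction that one of \PP{}, \TCG{}, \TMG{} could be solved in both $n^{f(k)}$ time and $f(k)n^c$ space; by the equivalences between these three problems (with no change in parameter) we may assume without loss of generality that \TMG{} admits such an algorithm $A$. The goal is then to convert $A$ into an algorithm for \textsc{Longest Common Subsequence} that simultaneously runs in $n^{f'(k)}$ time and $f'(k)n^{c'}$ space, contradicting SPSC.

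To achieve this conversion I would invoke the fact that \textsc{LCS} parameterized by the number of input strings is known to sit in XNLP (as established in~\cite{PW_18}), and hence in XALP, since XNLP $\subseteq$ XALP. Because \TMG{} is XALP-complete with respect to log-space reductions, there exists a log-space reduction $R$ from \textsc{LCS} to \TMG{}; such a reduction runs in polynomial time, produces an output of size $n' = n^{\bigo(1)}$, and transforms the parameter $k$ into $k' = g(k)$ for some computable function $g$.

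Composing $R$ with $A$ yields an algorithm for \textsc{LCS}. Its running time is at most the time of $R$ plus $(n')^{f(k')}$, which is bounded by $n^{f'(k)}$ for a suitable computable $f'$; its working space is at most the $\bigo(f(k)\log n)$ space used by $R$ plus the $f(k')(n')^c$ space used by $A$, which is bounded by $f'(k)n^{c'}$. Thus \textsc{LCS} would be solvable simultaneously in slice-wise polynomial time and slice-wise polynomial space, contradicting SPSC.

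The only mildly delicate point is the space accounting in the composition: one must ensure that feeding the output of $R$ into $A$ does not require materializing the entire intermediate instance in memory beyond what fits within the claimed bound. This is handled routinely by the standard observation that log-space reductions compose correctly with algorithms whose space usage is at least logarithmic, since each bit of the intermediate instance can be recomputed on demand in log space; the resulting combined space usage remains $f'(k)n^{c'}$. No new ideas beyond those already present in \autoref{thm:xalphard} and the XALP-completeness of \textsc{LCS} are needed.
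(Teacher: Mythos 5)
Your argument is correct and follows essentially the same route as the paper: LCS is XNLP-complete, XNLP is contained in XALP, so the XALP-hardness of \TMG{} yields a log-space reduction from LCS whose composition with a hypothetical fast-and-small algorithm would violate SPSC; you merely spell out the composition and space-accounting that the paper leaves implicit. One small slip: the XNLP-completeness of \textsc{Longest Common Subsequence} is due to Elberfeld, Stockhusen and Tantau~\cite{EST_15}, not to the SPSC paper~\cite{PW_18}.
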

\begin{proof}
    This proof uses the parameterized complexity class XNLP, which is defined as the class of parameterized problems that are solvable on a determenistic Turing machine using $\bigo(f(k)\log(n))$ memory. Comparing this with the definition of XALP shows that XALP-hardness implies XNLP-hardness. Since \textsc{Largest Common Subsequence} is XNLP-complete~\cite{EST_15}, SPSC applies to all XNLP-hard problems and consequently also to all XALP-hard problems such as the three problems from this corollary.
    \qed
\end{proof}

Compared with the existing algorithm that runs in $\bigo(n^{k+1})$ time and space~\cite{MWW_94}, these are close but not tight gaps.


\section{XALP Membership of \TCG{}}\label{sec:membership}
In this section we will prove \autoref{thm:xalpmember}.

Recall that \TCG{} asks us to determine whether a colored graph can be triangulated. Because of \autoref{prop:triang}\ref{prop:triang:treedecomp}, this is equivalent to finding a tree decomposition where each bag contains each color at most once. We claim that it is equivalent to find a tree decomposition where each bag contains each color \emph{exactly} once.
\begin{lemma}\label{lemma:exact}
    A colored graph admits a tree decomposition where each bag contains each color at most once, if and only if it admits a tree decomposition where each bag contains each color exactly once.
\end{lemma}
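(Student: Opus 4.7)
The forward direction is trivial since an exactly-once decomposition also satisfies the at-most-once condition. For the converse, I plan to take any at-most-once tree decomposition $T$ and enlarge each bag by already-existing vertices so that every color appears exactly once in every bag, without altering the tree itself; the vertex-cover and edge-cover axioms of a tree decomposition will then be preserved automatically, so the real work is in the connectedness axiom.

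To describe the enlargement, I will root $T$ arbitrarily. Fix a color $c$. Call a bag a \emph{color-$c$ bag} if it contains a (necessarily unique) color-$c$ vertex, and a \emph{topmost} color-$c$ bag if in addition none of its proper ancestors is a color-$c$ bag. For each bag $B$ I define a color-$c$ vertex $f_c(B)$ top-down: if $B$ is a color-$c$ bag, $f_c(B)$ is its color-$c$ vertex; if $B$ is the root and not a color-$c$ bag, pick $v^*_c$ to be the color-$c$ vertex of any topmost color-$c$ bag (such a bag exists because $c$ is used somewhere in the graph) and set $f_c(B) = v^*_c$; otherwise $f_c(B) = f_c(\mathrm{parent}(B))$. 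I then augment each bag $B$ by inserting $f_c(B)$ for every color $c$, which is a no-op when $B$ already contained that color.

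For the connectedness check, I will fix a color-$c$ vertex $v$ and show that the new region $R_v = \{B : f_c(B) = v\}$ is a connected subtree of $T$. The original region $T_v$ lies inside $R_v$ and is connected by hypothesis. For $B \in R_v \setminus T_v$, the bag $B$ contains no color-$c$ vertex of its own, so by the recursive definition $f_c(\mathrm{parent}(B)) = v$ and $\mathrm{parent}(B) \in R_v$. Iterating, the ancestor chain of $B$ stays in $R_v$ and either reaches $T_v$ --- in which case the connectedness of $T_v$ completes the connection --- or, when $v = v^*_c$, reaches the root instead.

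The main subtle point I anticipate is the case $v = v^*_c$: here the connection inside $R_{v^*_c}$ between the root and $T_{v^*_c}$ must be established separately. This is precisely where the topmost condition on $v^*_c$ is needed. Since no proper ancestor of $T_{v^*_c}$'s top is a color-$c$ bag, the bags strictly between the root and that top are all non-color-$c$ and inherit $v^*_c$ down from the root via the recursion, placing the entire root-to-top path inside $R_{v^*_c}$. Without this restriction on $v^*_c$, the path could cross a rival $T_{v'}$ and leave $R_{v^*_c}$ disconnected, so this careful choice is what makes the construction go through.
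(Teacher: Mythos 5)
Your proof is correct, and it reaches the same augmented decomposition that the paper's argument would eventually produce, but by a genuinely different route. The paper's proof is a short \emph{iterative local repair}: whenever some color is present in one bag but not in an adjacent bag, copy the relevant vertex over and repeat; termination is immediate because each step strictly increases the number of (bag, color) incidences, and correctness of each step is a one-line check. You instead give a \emph{one-shot global construction}: root the tree, define $f_c(B)$ by a top-down recursion that propagates vertices downward along color-$c$-free edges, seed the root (when colorless in $c$) with a carefully chosen $v^*_c$, and then verify connectedness of each $R_v$ directly. Your approach is more explicit about what the final bags look like, and you correctly isolate the only delicate point, namely that $v^*_c$ must come from a \emph{topmost} color-$c$ bag so that the root-to-$T_{v^*_c}$ path stays inside $R_{v^*_c}$; an arbitrary choice would indeed fail. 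The trade-off is that your argument needs the rooting artifact and the case analysis on whether the ancestor chain reaches $T_v$ or the root, whereas the paper's iterative argument is rooting-free and sidesteps any such choice. Both proofs share the tacit assumption that every color is actually used in the graph (otherwise ``exactly once'' is unattainable), which is implicit in the problem's parameterization.
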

\begin{proof}
    Omitted. See appendix \autoref{sec:proofs}.
    \qed
\end{proof}

We can now prove XALP membership.
\begin{proof}[of \autoref{thm:xalpmember}]
    We construct an alternating Turing machine (ATM) that, given an instance of \TCG{}, determines whether there exists a tree decomposition that contains each color exactly once. As a refresher, an ATM is a Turing machine that has access to both nondetermenistic and co-nondeterministic branching steps. A nondetermenistic step leads to ACCEPT if at least one successor state leads to ACCEPT and a co-nondetermenistic step leads to ACCEPT if all successor states lead to ACCEPT. 

    Our Turing machine is based on the XP-time algorithm we mentioned before~\cite{MWW_94}. We use the following claim without proof: given a graph $G$, a deterministic Turing machine can determine the whether two vertices belong to the same connected component in logarithmic space and polynomial time~\cite{R_08}. Repeated application of this result allows us to branch on all connected components of a graph using several co-nondeterministic steps.

    Let $G$ be any colored graph. The Turing machine will use nondeterministic steps to determine how to modify each bag compared to its parent and co-nondeterministic steps to simultaneously verify all subtrees. A precise formulation is given below:
    \begin{itemize}
        \item Using $k$ nondeterministic steps, determine an initial bag $S$ which contains one vertex of each color. During computations that lead to ACCEPT, each $S$ will be a bag from the tree decomposition.
        \item Keep track of some vertex $i$ that is initially NULL. This will signify the parent of the current bag $S$.
        \item Repeat the following until an ACCEPT or REJECT state is reached:
            \begin{itemize}
                \item Determine all components of $G \setminus S$. Using a co-nondeterministic step, we branch into every component except the one that contains $i$. If this results in zero branches (e.g. when there are no other components), ACCEPT.
                \item Let $C$ be the component our current branch is in. We determine a vertex $v \in C$ with a nondeterministic step.
                \item Determine the vertex $w \in S$ that has the same color as $v$. Since $S$ contains one vertex of every color, $w$ exists.
                \item If $w$ is adjacent to any vertex from $C$, REJECT. This means that the current guess for how to modify $S$ is incorrect.
                \item Modify $S$ by adding $v$ and removing $w$. Set $i$ to $w$.
            \end{itemize}
    \end{itemize}
    Overall, this alternating Turing machine constructively determines a rooted tree decomposition if one exists and thus solves \TCG{}. It also satisfies the memory requirement: the only memory usage is the set $S$, a constant number of extra vertices, and the memory needed to branch on connected components. Since memory of a vertex uses $\bigo(\log(n))$ space and $\abs{S} = k$, we need $\bigo(k\log(n))$ space. We also use polynomial time: the time usage in the computation of each bag is a constant plus the time needed to find the connected components which results in polynomial time overall. Finally, we require at most $\bigo(n)$ co-nondeterministic computation steps: each co-nondeterministic step corresponds to branching into a subtree of the eventual (rooted) tree decomposition. Since each subtree introduces at least one vertex that is used nowhere else in the tree, there are at most $\bigo(n)$ subtrees.

    Overall, we conclude that \TCG{} is contained in XALP.
    \qed
\end{proof}


\section{Zipper Chains and Gadgets}\label{sec:zipper}
In this section we will introduce two multicolored graph components, the \emph{zipper chain} and the \emph{zipper gadget}. Their most important property is \autoref{prop:zipper-gadget} which says that a zipper gadget has a fixed number of triangulations. This will be used in the XALP-hardness proof to represent a choice.
\begin{definition}
    A zipper chain is a multicolored graph that consists of two paths $P$ and $Q$, not necessarily of the same length. The vertices of $P$ and $Q$ are respectively labeled as $p_1, p_2, \ldots$ and $q_1, q_2, \ldots$.

    The vertices are colored in 7 colors, with 2 colors per vertex. For ease of explanation, the colors are grouped in three groups with sizes 1, 2, and 4. The first group contains one color $a$ which is added to odd-labeled vertices from $P$ and even-labeled vertices from $Q$. The second group contains the color $b_P$ which is added to even-labeled vertices of $P$ and the color $b_Q$ which is added to odd-labeled vertices of $Q$. The third group contains four colors $c_1, c_2, c_3$ and $c_4$ where $c_i$ is added to vertices in $P$ whose index is equivalent to $i$ (mod 4) and vertices in $Q$ whose index is equivalent to $i+2$ (mod 4).
\end{definition}
\begin{figure}[b]
    \centering
    \begin{tikzpicture}
        \begin{scope}[every node/.style={circle,thick,draw,inner sep=0pt,minimum size=0.6cm}]
            \node (p1) at (0,1) {\tiny$ac_1$};
            \node (p2) at (1,1) {\tiny$b_Pc_2$};
            \node (p3) at (2,1) {\tiny$ac_3$};
            \node (p4) at (3,1) {\tiny$b_Pc_4$};
            \node (p5) at (4,1) {\tiny$ac_1$};
            \node (p6) at (5,1) {\tiny$b_Pc_2$};
            \node (p7) at (6,1) {\tiny$ac_3$};
            \node (p8) at (7,1) {\tiny$b_Pc_4$};

            \node (q1) at (0,0) {\tiny$b_Qc_3$};
            \node (q2) at (1,0) {\tiny$ac_4$};
            \node (q3) at (2,0) {\tiny$b_Qc_1$};
            \node (q4) at (3,0) {\tiny$ac_2$};
            \node (q5) at (4,0) {\tiny$b_Qc_3$};
            \node (q6) at (5,0) {\tiny$ac_4$};
            \node (q7) at (6,0) {\tiny$b_Qc_1$};
            \node (q8) at (7,0) {\tiny$ac_2$};
        \end{scope}

        \begin{scope}[>={Stealth[black]},
                      every node/.style={fill=white,circle},
                      every edge/.style={draw=red,thick}]
            \draw[dotted] (-1,1) -- (p1);
            \draw (p1) -- (p2);
            \draw (p2) -- (p3);
            \draw (p3) -- (p4);
            \draw (p4) -- (p5);
            \draw (p5) -- (p6);
            \draw (p6) -- (p7);
            \draw (p7) -- (p8);
            \draw[dotted] (p8) -- (8,1);

            \draw[dotted] (-1,0) -- (q1);
            \draw (q1) -- (q2);
            \draw (q2) -- (q3);
            \draw (q3) -- (q4);
            \draw (q4) -- (q5);
            \draw (q5) -- (q6);
            \draw (q6) -- (q7);
            \draw (q7) -- (q8);
            \draw[dotted] (q8) -- (8,0);
        \end{scope}

        \node at (-1.5,1) {$P$:};
        \node at (-1.5,0) {$Q$:};

        \draw[decorate,decoration={brace}] (-0.2,1.5) -- (3.2,1.5) node[pos=0.5,above=5pt]{a tooth from $P$};
        \draw[decorate,decoration={brace}] (3.8,1.5) -- (7.2,1.5) node[pos=0.5,above=5pt]{another tooth from $P$};
        \draw[decorate,decoration={brace,mirror}] (-0.2,-0.5) -- (3.2,-0.5) node[pos=0.5,above=-20pt]{a tooth from $Q$};
        \draw[decorate,decoration={brace,mirror}] (3.8,-0.5) -- (7.2,-0.5) node[pos=0.5,above=-20pt]{another tooth from $Q$};
    \end{tikzpicture}
    \caption{A zipper chain.}\label{fig:zipper_chain}
\end{figure}
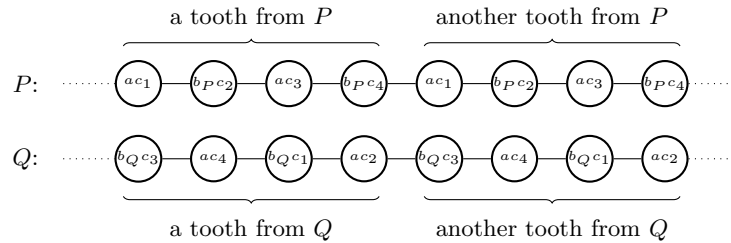

To summarize: the colors on path $P$ are $ac_1, b_Pc_2, ac_3, b_Pc_4, ac_1, \ldots$ and those of $Q$ are $b_Qc_3, ac_4, b_Qc_1, ac_2, b_Qc_3, \ldots$. This is visualized in \autoref{fig:zipper_chain}.

This color pattern repeats every four vertices. We call such a repetition a \emph{tooth} of the zipper chain. If a triangulation of the zipper chain contains an edge between some tooth of $P$ and some tooth of $Q$ and at least one endpoint of this edge contains the color $a$, we say that these two teeth are \emph{locked together}.

\begin{proposition}\label{prop:zipper}
    Let $G$ be a graph containing a zipper chain $(P,Q)$ and assume that there is a cycle that fully contains both $P$ and $Q$. Any triangulation of $G$ satisfies the following properties. Because of symmetry, all properties also hold with $P$ and $Q$ reversed.
    \begin{enumerate}[label=(\roman*)]
        \item\label{prop:zipper:vertical} There is no edge between two non-adjacent vertices of $P$.
        \item\label{prop:zipper:convex} If there exist two edges between $P$ and $Q$ which share an endpoint in $P$, then the common endpoint in $P$ is connected to all vertices of $Q$ that lie between the other two endpoints.
        \item\label{prop:zipper:climbing} If $(p_i,q_j)$ is an edge, then either $(p_{i+1},q_j)$ or $(p_i,q_{j+1})$ is also an edge (as long as either $p_{i+1}$ or $q_{j+1}$ exists).
        \item\label{prop:zipper:sliding} If $(p_i,q_j)$ is an edge and $p_i$ contains the color $a$, then $(p_{i+1},q_{j+1})$ is also an edge (as long as both $p_{i+1}$ and $q_{j+1}$ exist). Here, $q_{j+1}$ contains the color $a$.
        \item\label{prop:zipper:teeth} If the $i$-th tooth of $P$ and the $j$-th tooth of $Q$ are locked together, then the $i+1$-th tooth of $P$ and the $j+1$-th tooth of $Q$ are also locked together.
        \item\label{prop:zipper:unique} Each tooth from $P$ is locked together with at most one tooth from $Q$.
    \end{enumerate}
\end{proposition}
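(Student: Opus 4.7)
The plan is to establish (i)--(vi) in order, each building on earlier items, using Proposition 1 as the main tool throughout. For (i), every vertex of $P$ carries either $a$ or $b_P$, so any edge $(p_i, p_j)$ with $|i-j|\geq 2$ closes a cycle whose vertices are all covered by the two colors $\{a, b_P\}$; Proposition 1(i) then prohibits a proper triangulation. For (ii), I induct on $j'-j$: the cycle $p_i, q_j, q_{j+1},\ldots,q_{j'}$ must be triangulated, and by (i) applied to $Q$ no chord can live inside $Q$, so Proposition 1(ii) applied to any interior $q_k$ yields a chord $(p_i, q_k)$; recurse on the two smaller sub-cycles to fill in every edge $(p_i, q_m)$.

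For (iii), I work inside the cycle $C_0$ that contains both $P$ and $Q$. The edge $(p_i, q_j)$ splits $C_0$ into two sub-cycles; let $B$ be the one containing $p_{i+1}$ and $q_{j+1}$. Applying Proposition 1(ii) to $p_i$ in $B$, either $(p_{i+1}, q_j)$ is already an edge, or there is a chord $(p_i, w)$ to some non-neighbor $w$: a chord to another $P$-vertex is forbidden by (i), while a chord to some $Q$-vertex $q_k$ (with $k>j$), combined with $(p_i,q_j)$, triggers (ii) and delivers $(p_i, q_{j+1})$. The delicate case is $w$ on the portion of $C_0$ outside $P\cup Q$; this produces a strictly smaller sub-cycle through $q_j$ and $q_{j+1}$, in which I apply the same reasoning (now focusing on $q_j$ there), and iterate. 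Since the sub-cycle strictly shrinks each time, the descent terminates. This is the main technical obstacle.

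For (iv), the proper multi-coloring of $(p_i,q_j)$ combined with $p_i$ containing $a$ forces $j$ odd and $i-j$ a multiple of $4$, so $q_{j+1}$ also carries $a$. Applying (iii), the option $(p_i,q_{j+1})$ is ruled out because both endpoints carry $a$, so $(p_{i+1},q_j)$ is an edge; applying (iii) once more to this edge and ruling out $(p_{i+2}, q_j)$ by a shared $c$-color produces $(p_{i+1},q_{j+1})$. For (v), a lock edge is an $a$-edge as in the hypothesis of (iv); alternating applications of (iv) and its $P$-$Q$-reversed version slide the edge diagonally by $(+1,+1)$ at each step with an $a$-endpoint, and after four steps we land on an $a$-edge between the $(i+1)$-th tooth of $P$ and the $(j+1)$-th tooth of $Q$. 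For (vi), suppose tooth $i$ of $P$ were locked with two distinct teeth $j<j'$ of $Q$, giving two $a$-edges $e_1, e_2$; after relabeling so that their $P$-endpoints $p_{m_1}, p_{m_2}$ satisfy $m_1\leq m_2$, I propagate $e_1$ forward by $m_2-m_1$ diagonal steps via (iv), obtaining an edge $(p_{m_2}, q_{n_1+m_2-m_1})$ whose $q$-endpoint differs from that of $e_2$ because the two locks lie on diagonals with different offsets (one per tooth of $Q$). Proposition 1(ii) then forces $p_{m_2}$ to be adjacent to at least five consecutive $q$-vertices, and the periodic color pattern of $Q$ ensures at least one such edge shares a color with $p_{m_2}$, contradicting proper multi-coloring.
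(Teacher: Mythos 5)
Your treatment of parts \ref{prop:zipper:vertical}, \ref{prop:zipper:convex}, \ref{prop:zipper:sliding}, \ref{prop:zipper:teeth} and \ref{prop:zipper:unique} follows the paper's proof essentially step for step: the same alternating-colour cycle for \ref{prop:zipper:vertical}, the same only-one-$P$-vertex-in-the-cycle observation for \ref{prop:zipper:convex}, the same colour bookkeeping with a double application of \ref{prop:zipper:climbing} for \ref{prop:zipper:sliding}, the same fourfold sliding for \ref{prop:zipper:teeth}, and the same propagate-then-convexity-then-colour-clash scheme for \ref{prop:zipper:unique}. The one substantive divergence is in \ref{prop:zipper:climbing}, and that is exactly where your argument has a gap. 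You are right to worry about the case where \autoref{prop:triang}\ref{prop:triang:chord}, applied to $p_i$ in the sub-cycle $B$, returns a chord $(p_i,w)$ with $w$ on the portion of the cycle outside $P\cup Q$ --- the paper's proof silently skips this case by asserting that the chord target must lie in $Q$. But your descent does not close it. After cutting $B$ along $(p_i,w)$ and then, following your recipe, cutting again along a chord $(q_j,w')$ into the connector, the surviving sub-cycle through $q_j$ and $q_{j+1}$ contains neither $p_i$ nor $p_{i+1}$; applying \autoref{prop:triang}\ref{prop:triang:chord} inside it can only ever produce edges among $Q$-vertices and connector vertices, so no amount of iteration forces $(p_{i+1},q_j)$ or $(p_i,q_{j+1})$.

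The obstruction is not merely technical: with an uncoloured connector the claim is false. Take the $8$-cycle $p_1p_2p_3\,w\,q_3q_2q_1\,u$ where $w$ and $u$ carry no colours, add the edge $(p_1,q_1)$, and fan out from $w$ to $p_1,p_2,q_1,q_2$. The result is chordal and properly multicoloured, contains the $P$--$Q$ edge $(p_1,q_1)$, yet contains neither $(p_2,q_1)$ nor $(p_1,q_2)$. Hence any correct proof of \ref{prop:zipper:climbing} must invoke a property of the connecting portion of the cycle. In the intended application the connector consists only of head/tail/middle vertices carrying $b_P$, so a chord from an $a$-coloured $p_i$ into the connector closes an $a$/$b_P$-alternating cycle through the remainder of $P$ and is excluded by \autoref{prop:triang}\ref{prop:triang:alternate}, while a chord from a $b_P$-coloured $p_i$ is excluded by the colouring directly; this is precisely the argument the paper deploys later for the zipper gadget. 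You should either add such a hypothesis on the connector to the proposition, or restrict \ref{prop:zipper:climbing} to the cycles actually arising from zipper gadgets; as written, neither your descent nor the paper's one-line dismissal establishes the statement in the stated generality.
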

\begin{figure}[t]
    \centering
    \begin{tikzpicture}
        \begin{scope}[every node/.style={circle,thick,draw,inner sep=0pt,minimum size=0.6cm}]
            \node (p1) at (0,1) {\tiny$ac_1$};
            \node (p2) at (1,1) {\tiny$b_Pc_2$};
            \node (p3) at (2,1) {\tiny$ac_3$};
            \node (p4) at (3,1) {\tiny$b_Pc_4$};
            \node (p5) at (4,1) {\tiny$ac_1$};
            \node (p6) at (5,1) {\tiny$b_Pc_2$};
            \node (p7) at (6,1) {\tiny$ac_3$};
            \node (p8) at (7,1) {\tiny$b_Pc_4$};

            \node (q1) at (0,0) {\tiny$b_Qc_3$};
            \node (q2) at (1,0) {\tiny$ac_4$};
            \node (q3) at (2,0) {\tiny$b_Qc_1$};
            \node (q4) at (3,0) {\tiny$ac_2$};
            \node (q5) at (4,0) {\tiny$b_Qc_3$};
            \node (q6) at (5,0) {\tiny$ac_4$};
            \node (q7) at (6,0) {\tiny$b_Qc_1$};
            \node (q8) at (7,0) {\tiny$ac_2$};
        \end{scope}

        \begin{scope}[>={Stealth[black]},
                      every node/.style={fill=white,circle},
                      every edge/.style={draw=red,thick}]
            \draw[dotted] (-1,1) -- (p1);
            \draw (p1) -- (p2);
            \draw (p2) -- (p3);
            \draw (p3) -- (p4);
            \draw (p4) -- (p5);
            \draw (p5) -- (p6);
            \draw (p6) -- (p7);
            \draw (p7) -- (p8);
            \draw[dotted] (p8) -- (8,1);

            \draw[dotted] (-1,0) -- (q1);
            \draw (q1) -- (q2);
            \draw (q2) -- (q3);
            \draw (q3) -- (q4);
            \draw (q4) -- (q5);
            \draw (q5) -- (q6);
            \draw (q6) -- (q7);
            \draw (q7) -- (q8);
            \draw[dotted] (q8) -- (8,0);

            \draw (-0.5,0.5) -- (q1);
            \draw (p1) -- (q1);
            \draw (p2) -- (q1);
            \draw (p2) -- (q2);
            \draw (p2) -- (q3);
            \draw (p3) -- (q3);
            \draw (p4) -- (q3);
            \draw (p4) -- (q4);
            \draw (p4) -- (q5);
            \draw (p5) -- (q5);
            \draw (p6) -- (q5);
            \draw (p6) -- (q6);
            \draw (p6) -- (q7);
            \draw (p7) -- (q7);
            \draw (p8) -- (q7);
            \draw (p8) -- (q8);
            \draw (p8) -- (7.5,0.5);
        \end{scope}

        \node at (-1.5,1) {$P$:};
        \node at (-1.5,0) {$Q$:};
    \end{tikzpicture}
    \caption{A possible triangulation of a zipper chain.}\label{fig:triangulated_zipper_chain}
\end{figure}
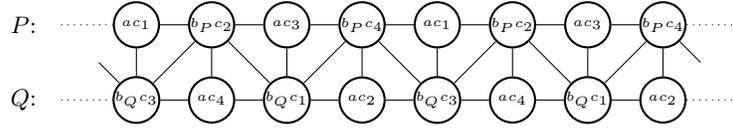
\begin{proof}
    We prove the statements in order.
    \begin{enumerate}[label=(\roman*)]
        \item If, to the contrary, such an edge does exist, then this edge together with the rest of $P$ forms a cycle whose vertices alternate between the colors $a$ and $b_P$. Because of \autoref{prop:triang}\ref{prop:triang:alternate} such a cycle cannot be triangulated.
        \item Because of part \ref{prop:zipper:vertical}, the cycle formed by these two edges and the path between the two endpoints on $Q$ can only be triangulated by adding edges with an endpoint in $P$.
        \item If $(p_{i+1},q_j)$ is not an edge then \autoref{prop:triang}\ref{prop:triang:chord} shows that $p_i$ must be connected to another vertex in the cycle. Because of part \ref{prop:zipper:vertical} this neighbor is a vertex from $Q$. Because of part \ref{prop:zipper:convex} $p_i$ must then also be connected to $q_{j+1}$.
        \item Without loss of generality, say that $p_i$ also contains the color $c_1$. Then, $q_j$ must have the colors $b_Q$ and $c_3$: all other color combinations share a color with $p_i$. Since $q_{j+1}$ and $p_i$ both contain the color $a$ there is no edge between them so part \ref{prop:zipper:climbing} implies that there is one between $p_{i+1}$ and $q_j$. Since $p_{i+2}$ and $q_j$ share the color $c_3$, the same argument implies that there is an edge between $p_{i+1}$ and $q_{j+1}$.
        \item Apply part \ref{prop:zipper:sliding} four times to the edge connecting the $i$-th and $j$-th teeth of $P$ and $Q$ (respectively) to obtain an edge connecting the $i+1$-th and $j+1$-th teeth of $P$ and $Q$ (respectively).
        \item Suppose to the contrary that a tooth from $P$ is locked together with two teeth from $Q$. After some applications of parts \ref{prop:zipper:climbing} and \ref{prop:zipper:sliding} we find that the last vertex from the tooth from $P$ (which has colors $b_P$ and $c_3$) is connected to the last vertex from both teeth from $Q$. Because of part \ref{prop:zipper:convex}, it is connected to all four vertices of the tooth from $Q$ with the higher index. At least one of these also contains the color $c_3$ so this is a contradiction.
    \end{enumerate}
    \qed
\end{proof}

\subsubsection{Zipper Gadgets.}

We now introduce the \emph{zipper gadget}. It is a zipper chain with a specific length and a head and tail. An example is given in \autoref{fig:zipper_gadget}.
\begin{definition}
    A zipper gadget of size $n$ and skew $s$ (satisfying $n>0,s \geq 0$) is a zipper chain with the following modifications:
    \begin{itemize}
        \item The path $P$ contains $4n-1$ vertices, and thus $n$ teeth. The last tooth misses one vertex.
        \item The path $Q$ contains $4(n+s)$ vertices, and thus $n+s$ teeth.
        \item There are two additional vertices with just the color $b_P$: a head $h$ and a tail $t$. The head is connected to the first vertices of $P$ and $Q$ and the tail to the last vertices of $P$ and $Q$.
    \end{itemize}
\end{definition}
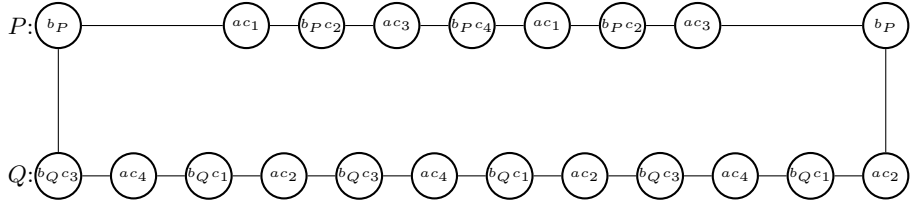
\begin{figure}[t]
    \centering
    \begin{tikzpicture}
        \begin{scope}[every node/.style={circle,thick,draw,inner sep=0pt,minimum size=0.6cm}]
            \node (h) at (0,2) {\tiny$b_P$};
            \node (t) at (11,2) {\tiny$b_P$};

            \node (p1) at (2.5,2) {\tiny$ac_1$};
            \node (p2) at (3.5,2) {\tiny$b_Pc_2$};
            \node (p3) at (4.5,2) {\tiny$ac_3$};
            \node (p4) at (5.5,2) {\tiny$b_Pc_4$};
            \node (p5) at (6.5,2) {\tiny$ac_1$};
            \node (p6) at (7.5,2) {\tiny$b_Pc_2$};
            \node (p7) at (8.5,2) {\tiny$ac_3$};

            \node (q1) at (0,0) {\tiny$b_Qc_3$};
            \node (q2) at (1,0) {\tiny$ac_4$};
            \node (q3) at (2,0) {\tiny$b_Qc_1$};
            \node (q4) at (3,0) {\tiny$ac_2$};
            \node (q5) at (4,0) {\tiny$b_Qc_3$};
            \node (q6) at (5,0) {\tiny$ac_4$};
            \node (q7) at (6,0) {\tiny$b_Qc_1$};
            \node (q8) at (7,0) {\tiny$ac_2$};
            \node (q9) at (8,0) {\tiny$b_Qc_3$};
            \node (q10) at (9,0) {\tiny$ac_4$};
            \node (q11) at (10,0) {\tiny$b_Qc_1$};
            \node (q12) at (11,0) {\tiny$ac_2$};
        \end{scope}

        \begin{scope}[>={Stealth[black]},
                      every node/.style={fill=white,circle},
                      every edge/.style={draw=red,thick}]
            \draw (h) -- (p1);
            \draw (p1) -- (p2);
            \draw (p2) -- (p3);
            \draw (p3) -- (p4);
            \draw (p4) -- (p5);
            \draw (p5) -- (p6);
            \draw (p6) -- (p7);
            \draw (p7) -- (t);

            \draw (h) -- (q1);
            \draw (q1) -- (q2);
            \draw (q2) -- (q3);
            \draw (q3) -- (q4);
            \draw (q4) -- (q5);
            \draw (q5) -- (q6);
            \draw (q6) -- (q7);
            \draw (q7) -- (q8);
            \draw (q8) -- (q9);
            \draw (q9) -- (q10);
            \draw (q10) -- (q11);
            \draw (q11) -- (q12);
            \draw (q12) -- (t);
        \end{scope}

        \node at (-0.5,2) {$P$:};
        \node at (-0.5,0) {$Q$:};
    \end{tikzpicture}
    \caption{A zipper gadget of size 2 and skew 1.}\label{fig:zipper_gadget}
\end{figure}
\begin{proposition}\label{prop:zipper-gadget}
    There are exactly $s+1$ ways to triangulate a zipper gadget with skew $s$. These ways are identified by the offset at which the teeth lock together.
\end{proposition}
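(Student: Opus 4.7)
The plan is to put the triangulations of a zipper gadget of skew $s$ in bijection with offsets $k \in \{0, 1, \ldots, s\}$, where offset $k$ means the $i$-th tooth of $P$ locks with the $(i+k)$-th tooth of $Q$. By \autoref{prop:zipper}\ref{prop:zipper:teeth} and \ref{prop:zipper:unique}, as soon as one pair of teeth is locked in a triangulation the whole lock pattern is a consistent offset, so it suffices to exhibit a triangulation for each offset, to show that every triangulation locks at least one pair of teeth, and to show the triangulation is uniquely determined once the offset is fixed.

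For existence I would, for each $k \in \{0, \ldots, s\}$, take the zigzag chords of \autoref{fig:triangulated_zipper_chain} between $P$ and $q_{4k+1}, \ldots, q_{4k+4n-1}$, together with a fan from $h$ to each of $q_1, \ldots, q_{4k+1}$ and a fan from $t$ to each of $q_{4k+4n-1}, \ldots, q_{4(n+s)}$. Color-admissibility is immediate: both $h$ and $t$ carry only the color $b_P$, disjoint from every color on $Q$, and the zigzag is color-compatible by direct inspection. By construction the outer cycle $C = h - P - t - Q^{-1}$ is decomposed into triangles. For uniqueness given $k$, the edges in the aligned region are forced by propagating from the locking edge $(p_1, q_{4k+1})$ via \autoref{prop:zipper}\ref{prop:zipper:climbing} and \ref{prop:zipper:sliding} combined with pairwise color-disjointness; in each overflow region the analogue of \ref{prop:zipper:vertical} applied to $Q$ forbids chords between non-adjacent $q_i$, and \ref{prop:zipper:unique} forbids any further chord from $p_1$ or $p_{4n-1}$ into the overflow, leaving only the fan triangulation from $h$ or $t$.

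The heart of the proof is forcing at least one pair of teeth to lock in any triangulation. I would apply \autoref{prop:triang}\ref{prop:triang:chord} at vertex $p_1$ on $C$: the edge $(h, p_2)$ would violate the $b_P$ color constraint, so $p_1$ must chord to some non-neighbor of $C$. \autoref{prop:zipper}\ref{prop:zipper:vertical} excludes chords from $p_1$ to other vertices of $P$, and the chord $(p_1, t)$ can be ruled out by a short case analysis: the resulting subcycle $p_1, p_2, \ldots, p_{4n-1}, t$ admits only spokes from $t$ to odd-indexed vertices of $P$ (by \ref{prop:zipper:vertical} and the $b_P$ clash between $t$ and even $p_i$), and even after adding all such spokes the remaining $4$-faces $t, p_{2j-1}, p_{2j}, p_{2j+1}$ have no admissible diagonal. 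Hence $p_1$'s chord runs to some $q_j$, and color-disjointness forces $j \equiv 1 \pmod{4}$, so the first tooth of $P$ locks with the $((j-1)/4 + 1)$-th tooth of $Q$, establishing an offset $k = (j-1)/4$. The bound $0 \leq k \leq s$ follows because $j \geq 1$ and because by \ref{prop:zipper:teeth} the $n$-th tooth of $P$ must lock with the $(n+k)$-th tooth of $Q$, requiring $k \leq s$. The main obstacle is this case analysis ruling out $(p_1, t)$; the remaining steps follow mechanically from the preceding propositions.
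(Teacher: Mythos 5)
Your proposal is correct and follows essentially the same route as the paper: force a chord from an $a$-colored vertex of $P$ into $Q$ via \autoref{prop:triang}\ref{prop:triang:chord}, classify the triangulations by a single offset via \autoref{prop:zipper}\ref{prop:zipper:teeth} and \ref{prop:zipper:unique}, and exhibit one triangulation per offset. The only notable difference is that the case analysis you flag as the main obstacle---ruling out the chord $(p_1,t)$---is unnecessary, since the cycle $p_1,\dots,p_{4n-1},t$ has every vertex colored with $a$ or $b_P$ and is therefore excluded in one line by \autoref{prop:triang}\ref{prop:triang:alternate}, which is exactly how the paper dispatches chords from $P$ to $h$, $t$, or $P$ all at once.
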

\begin{proof}
    Observe that the entire gadget forms a cycle, so \autoref{prop:zipper} applies. Consider a vertex from $P$ that contains the color $a$. Its neighbors share the color $b_P$, so \autoref{prop:triang}\ref{prop:triang:chord} shows that this vertex must be connected to some other vertex from the cycle. This cannot be $h$, $t$ or another vertex from $P$ since that would introduce a cycle containing only the colors $a$ and $b_P$. Hence, the other endpoint must be a vertex from $Q$. This shows that each tooth from $P$ is locked together with at least one tooth from $Q$.

    \autoref{prop:zipper}\ref{prop:zipper:unique} now shows that each tooth from $P$ is locked together with exactly one tooth from $Q$. Let $\Delta$ be the index of the tooth locked together with the first tooth of $P$. \autoref{prop:zipper}\ref{prop:zipper:teeth} now shows that any tooth with index $i$ must be connected to tooth $i+\Delta$. Since $Q$ has $s$ more teeth than $P$, the offset $\Delta$ must be between 0 and $s$. We conclude that there are at most $s+1$ ways to triangulate a zipper gadget with offset $s$ and that these ways are identified by the offset.

    To complete the proof, we now show that each case can actually be extended into a triangulation of the zipper gadget. Let $\Delta$ be the target offset. We add the following edges:
    \begin{itemize}
        \item An edge between the head $h$ and every vertex from the first $\Delta$ teeth from $Q$.
        \item Edges between the $i$-th tooth from path $P$ and the $i+\Delta$-th tooth from $Q$ according to the pattern described in parts \ref{prop:zipper:climbing} and \ref{prop:zipper:sliding} of \autoref{prop:zipper}. This includes one overlap edge between the last vertex of each tooth from $P$ and the first vertex from the next tooth from $Q$.
        \item An edge between the tail $t$ and every vertex from the last $s-\Delta$ teeth from $Q$.
    \end{itemize}
    An example of such a triangulation is given in \autoref{fig:triangulated_zipper_gadget}. One can observe that this construction indeed triangulates the zipper gadget.
    \qed
\end{proof}
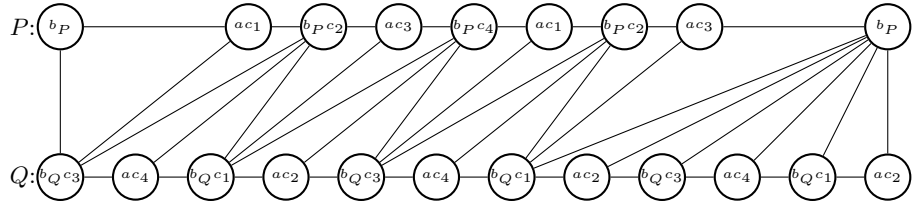
\begin{figure}[t]
    \centering
    \begin{tikzpicture}
        \begin{scope}[every node/.style={circle,thick,draw,inner sep=0pt,minimum size=0.6cm}]
            \node (h) at (0,2) {\tiny$b_P$};
            \node (t) at (11,2) {\tiny$b_P$};

            \node (p1) at (2.5,2) {\tiny$ac_1$};
            \node (p2) at (3.5,2) {\tiny$b_Pc_2$};
            \node (p3) at (4.5,2) {\tiny$ac_3$};
            \node (p4) at (5.5,2) {\tiny$b_Pc_4$};
            \node (p5) at (6.5,2) {\tiny$ac_1$};
            \node (p6) at (7.5,2) {\tiny$b_Pc_2$};
            \node (p7) at (8.5,2) {\tiny$ac_3$};

            \node (q1) at (0,0) {\tiny$b_Qc_3$};
            \node (q2) at (1,0) {\tiny$ac_4$};
            \node (q3) at (2,0) {\tiny$b_Qc_1$};
            \node (q4) at (3,0) {\tiny$ac_2$};
            \node (q5) at (4,0) {\tiny$b_Qc_3$};
            \node (q6) at (5,0) {\tiny$ac_4$};
            \node (q7) at (6,0) {\tiny$b_Qc_1$};
            \node (q8) at (7,0) {\tiny$ac_2$};
            \node (q9) at (8,0) {\tiny$b_Qc_3$};
            \node (q10) at (9,0) {\tiny$ac_4$};
            \node (q11) at (10,0) {\tiny$b_Qc_1$};
            \node (q12) at (11,0) {\tiny$ac_2$};
        \end{scope}

        \begin{scope}[>={Stealth[black]},
                      every node/.style={fill=white,circle},
                      every edge/.style={draw=red,thick}]
            \draw (h) -- (p1);
            \draw (p1) -- (p2);
            \draw (p2) -- (p3);
            \draw (p3) -- (p4);
            \draw (p4) -- (p5);
            \draw (p5) -- (p6);
            \draw (p6) -- (p7);
            \draw (p7) -- (t);

            \draw (h) -- (q1);
            \draw (q1) -- (q2);
            \draw (q2) -- (q3);
            \draw (q3) -- (q4);
            \draw (q4) -- (q5);
            \draw (q5) -- (q6);
            \draw (q6) -- (q7);
            \draw (q7) -- (q8);
            \draw (q8) -- (q9);
            \draw (q9) -- (q10);
            \draw (q10) -- (q11);
            \draw (q11) -- (q12);
            \draw (q12) -- (t);

            \draw (p1) -- (q1);
            \draw (p2) -- (q1);
            \draw (p2) -- (q2);
            \draw (p2) -- (q3);
            \draw (p3) -- (q3);
            \draw (p4) -- (q3);
            \draw (p4) -- (q4);
            \draw (p4) -- (q5);
            \draw (p5) -- (q5);
            \draw (p6) -- (q5);
            \draw (p6) -- (q6);
            \draw (p6) -- (q7);
            \draw (p7) -- (q7);
            \draw (t) -- (q7);
            \draw (t) -- (q8);
            \draw (t) -- (q9);
            \draw (t) -- (q10);
            \draw (t) -- (q11);
        \end{scope}

        \node at (-0.5,2) {$P$:};
        \node at (-0.5,0) {$Q$:};
    \end{tikzpicture}
    \caption{One of the two triangulations of the zipper gadget from \autoref{fig:zipper_gadget}. This one has offset 0.}\label{fig:triangulated_zipper_gadget}
\end{figure}


\section{XALP-hardness of \TMG{}}\label{sec:hardness}
In this section we describe the reduction from \autoref{thm:xalphard}. The intuition is as follows. We want to reduce from \TCMIS{}, which comes down to selecting a vertex from each color for each instance of \MIS{}. These choices must be compatible: we may not choose two vertices which share an edge. The selection of a vertex will be done by creating zipper gadgets and interpreting each possible triangulation as a choice of a vertex. The compatibility checks will be done by combining two zipper gadgets in a way that makes it impossible to simultaneously triangulate both zipper gadgets in the respective choices. This construction borrows a technique, namely on how to create and combine gadgets from the TCMIS tree, from the XALP-completeness proof for \textsc{Tree Partition Width} from~\cite{BGJ_22}. The actual gadgets and their combination procedure are new.

Let an instance of TCMIS be given. Let $T$ and $k$ be the (binary) tree and parameter from this instance. For any node $n \in T$, we have an associated instance of \MIS{} consisting of a set of vertices $S_c$ for each color $c$. Without loss of generality, we can assume that all sets $S_c$ have the same size, say $r+1$: if not, then we can add extra vertices to $S_c$ that are connected to all other vertices and thus never occur in an independent set. We also assume that $S_c$ is ordered in some way. This allows us to refer to vertices as $v_{n,c,i}$ where $n$ is the node from $T$, $c$ is the color, and $i$ is the index in $S_c$ (which, for ease of explanation, is zero-based). We also have a set of edges $E$, which we again assume to be ordered in some way. Each edge connects two vertices $v_{n_1,c_1,i_1}$ and $v_{n_2,c_2,i_2}$ where $n_1$ and $n_2$ are either the same node or neighbors in $T$ and where $c_1$ and $c_2$ are distinct if $n_1=n_2$. Let $m := \abs{E}$ be the total number of edges.

First, we transform $T$ into a rooted tree $T'$ by choosing any node $u \in T$, adding two new nodes $v$ and $w$ and two edges $(u,v)$ and $(v,w)$, and setting $w$ as the root. This way, each node from the original tree $T$ has a parent and a grandparent in $T'$. We now construct a graph $G$ which will be an instance of TMG. It will consist of several zipper gadgets in which some vertices have been identified with each other: that is, where some vertices with distinct colors are merged into one vertex with the combined set of colors. For each node $n$ in $T$ and each color $c$ in its associated instance of \MIS{}, we add a zipper gadget $z_{n,c}$ of size $2mr+1$ and skew $r$. The middle tooth of path $P$ (with index $mr+1$) is special: we call it the \emph{middle}. We now say that this zipper gadget starts in $n$, passes through the parent of $n$ and ends in the grandparent of $n$. This is supported with some vertex identifications: for each node $n$ in $T'$, we identify the heads of all zipper gadgets starting at $n$, the tails of all zipper gadgets ending at $n$, and the last vertex of the middles (with colors $c_4$ and $b_P$) of all zipper gadgets that pass through $n$. Observe that the path $P$ of each zipper gadget now consists of $m$ sets of $r$ teeth between its head and middle, and also $m$ sets of $r$ teeth between its middle and tail.

Each zipper gadget is assigned its own set of 7 colors such that no two zipper gadgets which start, pass through, or end in a common node share a color. We claim that this can be done using at most $7k$ sets of 7 colors. Assign colors to nodes in order of distance to the root of $T'$ (closest to the root first). Let $n$ be the current node. All zipper gadgets that have already been assigned colors and intersect with zipper gadgets starting from $n$ are those that start at either: $n$'s parent, the other child of $n$'s parent ($n$'s sibling), $n$'s grandparent, the other child of $n$'s grandparent ($n$'s uncle), or any of the two children from that vertex ($n$'s cousins). In total, this is at most $6k$ other zipper gadgets. To color the $k$ zipper gadgets starting at $n$, we can thus use the remaining $7k-6k = k$ sets of 7 colors.

In a triangulation of $G$, each zipper gadget will represent a choice of a vertex from $S_c$: if the zipper gadget is triangulated with offset $\Delta$, then we choose the vertex with index $\Delta$ from $S_c$. Each of the $m$ sets of $r$ teeth between head and middle or between middle and tail will represent a restriction regarding one of the edges. For each edge $e_i$ (with index $i$) with endpoints $v_{n_1,c_1,i_1}$ and $v_{n_2,c_2,i_2}$ we want to exclude the possibility of simultaneously triangulating the zipper gadget $z_{n_1,c_1}$ with offset $i_1$ and the zipper gadget $z_{n_2,c_2}$ with offset $i_2$. This is done as follows.

Let $(P_1,Q_1)$ and $(P_2,Q_2)$ be the paths which form the zipper gadgets. We now identify two vertices from $P_1$ and $P_2$ and add a new color $d$ to some vertices from $Q_1$ and $Q_2$. This is visualized in \autoref{fig:zipper_merge}. The idea is that if we would triangulate both zipper gadgets in a way that adds edges between the vertices with color $d$ and the merged vertex, then any triangulation of both zipper gadgets together forces an edge between the vertices with the color $d$ which is impossible. We now describe exactly which vertices should be modified.

We consider two cases: either $n_1$ and $n_2$ are the same node or they are neighbors in $T$. In the first case, we consider the tooth with index $ir$ from both $P_1$ and $P_2$ and identify the first vertex from these teeth with each other. We also consider tooth $ir+i_1$ from $Q_1$ and tooth $ir+i_2$ from $Q_2$ and add a new color $d$ to the first vertex of these teeth. In the second case, we assume without loss of generality that $n_1$ is the parent of $n_2$. We do almost the same as in the first case, except that we use the second half of the zipper gadget $z_{n_2,c_2}$: we identify the first vertex of tooth $ir$ from $P_1$ and tooth $mr+1+ir$ from $P_2$, and we add color $d$ to the first vertex of tooth $ir+i_1$ from $Q_1$ and tooth $mr+1+ir+i_2$ from $Q_2$.

This completes the construction. Observe that this construction uses $49k+1$ colors ($7k$ sets of 7 colors for the zipper gadgets and one for the extra color $d$) and thus that the change in parameter is linear. Also observe that the construction can be performed in logarithmic working space since the creation and merging of the zipper gadgets only require local information from the original TCMIS instance. This shows that we indeed have a logspace reduction.

The proof that this TMG instance admits a triangulation if and only if the original TCMIS instance admits a solution is a direct result of the intuitive insights mentioned during the construction and thus omitted from the main text. A full proof is given in appendix \autoref{sec:fullproof}.
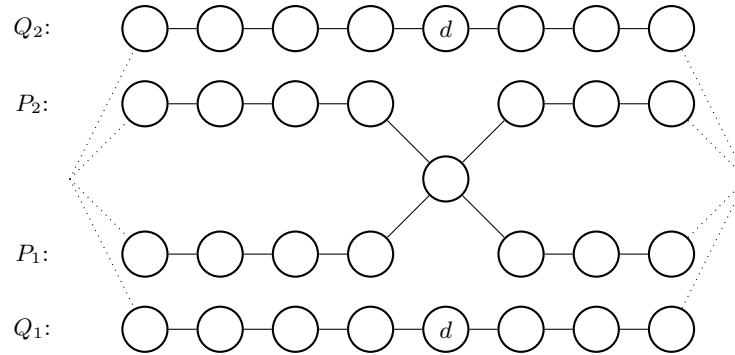
\begin{figure}[t]
\centering
\begin{tikzpicture}
    \begin{scope}[every node/.style={circle,thick,draw,inner sep=0pt,minimum size=0.6cm}]
        \node (q11) at (0,0) {};
        \node (q12) at (1,0) {};
        \node (q13) at (2,0) {};
        \node (q14) at (3,0) {};
        \node (q15) at (4,0) {\small$d$};
        \node (q16) at (5,0) {};
        \node (q17) at (6,0) {};
        \node (q18) at (7,0) {};

        \node (p11) at (0,1) {};
        \node (p12) at (1,1) {};
        \node (p13) at (2,1) {};
        \node (p14) at (3,1) {};
        \node (p16) at (5,1) {};
        \node (p17) at (6,1) {};
        \node (p18) at (7,1) {};

        \node (p5) at (4,2) {};

        \node (p21) at (0,3) {};
        \node (p22) at (1,3) {};
        \node (p23) at (2,3) {};
        \node (p24) at (3,3) {};
        \node (p26) at (5,3) {};
        \node (p27) at (6,3) {};
        \node (p28) at (7,3) {};

        \node (q21) at (0,4) {};
        \node (q22) at (1,4) {};
        \node (q23) at (2,4) {};
        \node (q24) at (3,4) {};
        \node (q25) at (4,4) {\small$d$};
        \node (q26) at (5,4) {};
        \node (q27) at (6,4) {};
        \node (q28) at (7,4) {};
    \end{scope}

    \begin{scope}[>={Stealth[black]},
                  every node/.style={fill=white,circle},
                  every edge/.style={draw=red,thick}]
        \draw[dotted] (-1,2) -- (p11);
        \draw (p11) -- (p12);
        \draw (p12) -- (p13);
        \draw (p13) -- (p14);
        \draw (p14) -- (p5);
        \draw (p5) -- (p16);
        \draw (p16) -- (p17);
        \draw (p17) -- (p18);
        \draw[dotted] (p18) -- (8,2);

        \draw[dotted] (-1,2) -- (q11);
        \draw (q11) -- (q12);
        \draw (q12) -- (q13);
        \draw (q13) -- (q14);
        \draw (q14) -- (q15);
        \draw (q15) -- (q16);
        \draw (q16) -- (q17);
        \draw (q17) -- (q18);
        \draw[dotted] (q18) -- (8,2);

        \draw[dotted] (-1,2) -- (p21);
        \draw (p21) -- (p22);
        \draw (p22) -- (p23);
        \draw (p23) -- (p24);
        \draw (p24) -- (p5);
        \draw (p5) -- (p26);
        \draw (p26) -- (p27);
        \draw (p27) -- (p28);
        \draw[dotted] (p28) -- (8,2);

        \draw[dotted] (-1,2) -- (q21);
        \draw (q21) -- (q22);
        \draw (q22) -- (q23);
        \draw (q23) -- (q24);
        \draw (q24) -- (q25);
        \draw (q25) -- (q26);
        \draw (q26) -- (q27);
        \draw (q27) -- (q28);
        \draw[dotted] (q28) -- (8,2);
    \end{scope}

    \node at (-1.5,1) {$P_1$:};
    \node at (-1.5,0) {$Q_1$:};
    \node at (-1.5,3) {$P_2$:};
    \node at (-1.5,4) {$Q_2$:};
\end{tikzpicture}
\caption{How two zipper gadgets are combined: two vertices from $P_1$ and $P_2$ are merged into one vertex and two vertices from $Q_1$ and $Q_2$ are given the extra color $d$.}\label{fig:zipper_merge}
\end{figure}


\section{Future Research}\label{sec:future}
Let $n$ be the input size, $k$ the parameter, $f$ any computable function, $c$ any constant, and $\epsilon$ any small positive constant. We have shown that \PP{} and \TCG{} are XALP-complete and that (assuming ETH) there exist no algorithms that solve any of them in $f(k)n^{o(k)}$ time. This increases the number of ``natural'' problems in the complexity class XALP and gives more reason to determine properties of this complexity class. Additionally, these problems can be used as a starting point for XALP-hardness reductions for other parameterized problems.

Another future research direction might be to close or reduce the gaps between the current upper and lower bounds on space and time usage. For the time gap, there is a lower bound of $f(k)n^{o(k)}$ (assuming ETH) and an upper bound of $\bigo(n^{k+1})$~\cite{MWW_94}. For the space gap on algorithms that run in $n^{f(k)}$ time, there is a lower bound of $f(k)n^c$ (assuming SPSC) and an upper bound of again $\bigo(n^{k+1})$~\cite{MWW_94}. One way to close the time gap could be by assuming the Strong Exponential Time Hypothesis (SETH). We expect that, assuming SETH, a lower bound like $f(k)n^{k-\epsilon}$ should be possible.

We also rule out a research direction. Triangulating a colored graph comes down to finding a tree decomposition where each bag contains each color at most once. A similar problem would be to instead look for a \emph{path} decomposition where each bag contains each color at most one. This problem, known as \textsc{Intervalizing Colored Graphs}, is already NP-complete for the case $k=4$~\cite{BF_96}.


%
%

\bibliographystyle{splncs04}
\bibliography{perfect-phylogeny}

\appendix


\section{Triangulating Multicolored Graphs}\label{sec:multicolor}

In this appendix section we will prove the equivalence of \TCG{} (TCG) and \TMG{} (TMG). We begin with two lemmas.
\begin{lemma}\label{lemma:full}
    Let $G$ be a graph and $T$ a tree decomposition of $G$. Then for each clique $C$ there is a bag in $T$ which fully contains $C$.
\end{lemma}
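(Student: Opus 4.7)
The plan is to prove this via the Helly property of subtrees of a tree, which is the standard route for this classical lemma. For each vertex $v \in C$, let $T_v \subseteq T$ be the subgraph induced by the bags containing $v$. By the third condition in \autoref{def:treedecomp}, each $T_v$ is a (nonempty) subtree of $T$. I want to conclude that there is a single bag lying in $\bigcap_{v \in C} T_v$, which is exactly a bag containing all of $C$.

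The first step is pairwise intersection. For any two distinct $u,v \in C$, the edge $uv$ exists because $C$ is a clique, so by the second condition of a tree decomposition there is a bag containing both $u$ and $v$; that bag lies in $T_u \cap T_v$. Hence $\{T_v : v \in C\}$ is a family of pairwise intersecting subtrees of a tree.

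The second step is to upgrade pairwise intersection to a common bag. I would use the Helly property for subtrees of a tree: any finite family of pairwise intersecting subtrees of a tree has a common vertex. Applying it to the $T_v$ yields a bag in the intersection, finishing the proof. If one does not wish to cite Helly as a black box, it admits a short induction on $|C|$: for $|C| \le 2$ the conclusion follows immediately from the definition of a tree decomposition, and for the inductive step one picks any $v \in C$ and, after obtaining a bag $B$ that contains $C \setminus \{v\}$ by the induction hypothesis, argues that $T_v$ must meet $B$ by a routing argument in $T$ (otherwise $T_v$ lies entirely in one component of $T \setminus \{B\}$, and any $u \in C \setminus \{v\}$ whose subtree $T_u$ is on the opposite side of $B$ from $T_v$ contradicts $T_u \cap T_v \neq \emptyset$, which was established in the previous step).

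The main obstacle, if any, is justifying the Helly step rigorously; the rest of the argument is bookkeeping on the tree decomposition axioms. Since the Helly property of subtrees of a tree is standard and appears in virtually every textbook treatment of tree decompositions and chordal graphs, I would simply cite it; if a self-contained write-up is preferred, the short inductive argument sketched above suffices and remains entirely elementary.
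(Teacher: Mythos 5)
Your main route---pairwise intersection of the subtrees $T_v$ via the clique edges, followed by the Helly property for subtrees of a tree---is correct, and citing Helly is a legitimate, standard way to dispatch this lemma. The paper instead proves the relevant instance of Helly by hand, via induction on $|C|$: it takes a bag $B_r \supseteq C \setminus \{v\}$ from the induction hypothesis, roots $T$ at $B_r$, and shows that the \emph{topmost} bag $B_v$ of the subtree $T_v$ contains all of $C$, since each $w \in C \setminus \{v\}$ lies in an ancestor ($B_r$) and a descendant ($B_w$, a bag witnessing the edge $vw$) of $B_v$, hence in $B_v$ by connectivity of $T_w$. The two arguments buy the same thing; yours is shorter, the paper's is self-contained.

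However, your fallback inductive sketch---offered precisely for the case where a self-contained write-up is wanted---does not work as stated. You claim that $T_v$ must meet the specific bag $B$ obtained from the induction hypothesis. That is false: for a triangle $\{u,v,w\}$ with tree decomposition consisting of two adjacent bags $\{u,w\}$ and $\{u,v,w\}$, the induction hypothesis may hand you $B = \{u,w\}$, which $T_v$ does not meet. The contradiction you propose also cannot arise: every $T_u$ with $u \in C \setminus \{v\}$ contains $B$ itself, so no such $T_u$ lies entirely on the opposite side of $B$ from $T_v$; each of them simply reaches from $B$ into the component of $T \setminus \{B\}$ containing $T_v$, with no conflict. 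The repair is exactly the paper's move: do not insist on $B$ itself, but walk from $B$ toward $T_v$ and take the first bag of $T_v$ encountered (equivalently, the topmost bag of $T_v$ when $T$ is rooted at $B$); connectivity of each $T_u$ then forces all of $C \setminus \{v\}$ into that bag. If you rely on the black-box Helly citation your proof stands; if you need it self-contained, replace the sketch with this corrected step.
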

\begin{proof}
    We use induction on the size of $C$. For $\abs{C} = 1$ and $\abs{C} = 2$, the result follows directly from the definition of tree decomposition. Now suppose that $\abs{C} > 2$ and let $v$ be a vertex from $C$. By induction hypothesis, there must be a bag $B_r$ that contains $C\setminus \{v\}$. Consider $T$ as a rooted tree with $B_r$ as root. Let $T_v$ be the subtree of bags that contain $v$, and let $B_v$ be the lowest common ancestor of all bags in $T_v$. Because $T_v$ is connected, $v \in B_v$.

    For any $w \in C \setminus \{v\}$, $(v,w)$ is an edge so there must be a bag $B_w \in T_v$ that contains both $v$ and $w$. Now, $w$ is contained in both a descendant ($B_w$) and ancestor ($B_r$) of $B_v$, so $w \in B_v$. Since this holds for any $w \in C \setminus \{v\}$, we conclude that $C \subset B_v$ which completes the induction. 
    \qed
\end{proof}

Some notation: for a graph $G = (V,E)$, tree decomposition $T$, and vertex set $C \subset V$, we define $T_C$ as the subset of bags in $T$ that contain $C$.
\begin{lemma}\label{lemma:treedecomp}
    Let $T$ be a tree where each vertex (bag) is associated with a subset of vertices from $G$. Then $T$ is a tree decomposition of $G$ if and only if: for each clique $C$ in $G$, $T_C$ is nonempty and connected.
\end{lemma}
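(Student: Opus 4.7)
The plan is to prove the two directions separately and exploit \autoref{lemma:full} heavily in the forward direction.

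For the forward direction, assume $T$ is a tree decomposition. Nonemptiness of $T_C$ for any clique $C$ is immediate from \autoref{lemma:full}. For connectedness, I would take any two bags $B_1, B_2 \in T_C$ and any bag $B$ on the unique path between them in $T$; then for each individual vertex $v \in C$, the set of bags containing $v$ is a connected subtree (by the definition of tree decomposition) that contains both $B_1$ and $B_2$, so it also contains $B$. Taking the intersection over all $v \in C$ gives $C \subseteq B$, hence $B \in T_C$.

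For the backward direction, I would verify the three defining conditions of a tree decomposition by applying the hypothesis to cliques of size 1 and 2. Each singleton $\{v\}$ is a clique, so $T_{\{v\}}$ is nonempty (giving the vertex-coverage condition) and connected (giving the third condition). Each edge $\{u,v\}$ is a clique of size 2, so $T_{\{u,v\}}$ is nonempty, giving the edge-coverage condition.

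I do not expect any serious obstacle here: both directions are essentially bookkeeping once \autoref{lemma:full} is in hand. The only subtlety worth stating carefully is the path-intersection argument in the forward direction, since it is where the ``each color once per bag'' reformulation ultimately relies on extending the single-vertex connectedness axiom to arbitrary cliques. I would phrase that step explicitly rather than leaving it to the reader.
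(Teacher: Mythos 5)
Your proof is correct and follows essentially the same route as the paper's: nonemptiness of $T_C$ via \autoref{lemma:full}, connectedness via the observation that every bag on the path between two bags of $T_C$ must contain each $v \in C$ individually, and the converse by specializing to cliques of size one and two. No gaps.
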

\begin{proof}
    ($\Rightarrow$): Let $T$ be a tree satisfying this condition. All three conditions to being a tree decomposition follow directly from the fact that single vertices and edges are cliques.

    ($\Leftarrow$): Let $T$ be a tree decomposition. Let $C$ be any clique. \autoref{lemma:full} shows that $T_C$ is nonempty. We now show that it is connected. Let $B_1$, $B_2$ be two bags that fully contain $C$. For any vertex $v \in C$, we know that $T_{\{v\}}$ is connected, so every bag on the path between $B_1$ and $B_2$ contains $v$. Since this holds for any $v \in C$, the bags between $B_1$ and $B_2$ fully contain $C$. It follows that $T_C$ contains every bag between $B_1$ and $B_2$. Since this holds for any $B_1$ and $B_2$, we conclude that $T_C$ is connected.
    \qed
\end{proof}

We now prove the equivalence.
\begin{theorem}\label{thm:TCG-TMG}
    The problems \TCG{} and \TMG{} are equivalent under parameterized reductions. The parameter does not change under these reductions.
\end{theorem}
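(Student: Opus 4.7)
The plan is to establish the equivalence in both directions while keeping the number of colors unchanged, using \autoref{prop:triang}\ref{prop:triang:treedecomp} to recast both problems as asking for a tree decomposition in which every bag contains each color at most once.

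The direction from \TCG{} to \TMG{} is the identity map: every colored graph is already multicolored with singleton color sets and the parameter is preserved. The interesting direction is \TMG{} to \TCG{}. My reduction first checks (in logarithmic working space) whether the input multicolored graph $G$ is itself properly multicolored; if not, no triangulation can be properly multicolored, so I output a trivial no-instance. Otherwise I build a colored graph $G'$ by replacing each vertex $v$ with color set $\{c_1, \ldots, c_t\}$ by a clique $K_v = \{v^{c_1}, \ldots, v^{c_t}\}$ of singly-colored vertices, and by adding, for each edge $(u,v) \in E(G)$, every edge between $K_u$ and $K_v$. This is an FPT (indeed logspace) construction and the number of colors is unchanged.

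To prove correctness I would argue on the level of tree decompositions. For the forward implication, given a tree decomposition $T$ of $G$ with each color at most once per bag, I keep the tree shape and enlarge each bag by replacing each $v \in B$ with the whole clique $K_v$. The three tree-decomposition axioms and the at-most-once condition transfer directly. For the backward implication, given a corresponding decomposition $T'$ of $G'$, I note that each $K_v$, and each $K_u \cup K_v$ for $(u,v) \in E(G)$, is a clique of $G'$ by construction. \autoref{lemma:full} thus places each such clique entirely inside some bag of $T'$. Setting $B := \{v : K_v \subseteq B'\}$ on the same tree shape then yields a tree decomposition of $G$: vertex and edge inclusion follow from these clique placements, connectedness follows from \autoref{lemma:treedecomp} applied to each $K_v$, and the at-most-once property follows by contraposition, since any same-color collision in $B$ would produce two same-colored copies in $B'$.

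The main delicate point I foresee is choosing the edge set of $G'$ correctly: taking the complete bipartite graph between $K_u$ and $K_v$ for every original edge is the right choice because it is small enough that the forward translation of tree decompositions works directly, yet rich enough that each $K_u \cup K_v$ is a clique of $G'$, which is precisely what the backward direction needs in order to invoke \autoref{lemma:full}. Rejecting non-properly-multicolored inputs up front ensures that these added edges never join two same-colored vertices.
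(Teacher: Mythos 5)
Your proposal is correct and follows essentially the same route as the paper: the identity map in one direction, and in the other the same clique-expansion construction (replace each multicolored vertex by a singly-colored clique, join expanded cliques completely across original edges), with correctness argued by translating tree decompositions back and forth via \autoref{lemma:full} and \autoref{lemma:treedecomp}. The only difference is your up-front rejection of non-properly-multicolored inputs, which is harmless but unnecessary, since such inputs already map to no-instances of \TCG{} under the construction.
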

\begin{proof}
    The right implication is trivial: each instance of TCG is also an instance of TMG and each corresponding solution to TMG is also a solution to the TCG instance. We focus on the left implication. Let $G = (V,E)$ be an instance of TMG (with $k$ colors). Construct a graph $G'$ as follows. For each vertex $v \in V$ (with $k_v$ colors) we create a clique $C_v$ containing $k_v$ vertices, each colored in one of the colors of $v$. For each edge $(v,w) \in E$ we add an edge between each pair of vertices from $C_v$ and $C_w$ to turn $C_v + C_w$ into a large clique. The resulting graph $G' = (V',E')$ is now an instance of TCG. We claim that it has a solution if and only if the original TMG instance has a solution.

    \subsubsection{First Direction.}
    Let $T$ be a tree decomposition of $G$ that respects the multicoloring. For each vertex $v$, we replace each occurrence of $v$ in bags of $T$ with all the vertices from $C_v$. Call the result $T'$. We first show that $T'$ is a tree decomposition of $G'$ using \autoref{lemma:treedecomp}.

    Let a clique $C' \subset V'$ be given. For any vertex $v \in V$, we know that a bag of $T'$ contains either all vertices from $C_v$ or none. This implies the following: if $C'$ contains some (but not all) vertices from $C_v$, then the set of bags that contain $C'$ is the same as the set of bags that contain $C' \cup C_v$. Hence, without loss of generality we may assume that $C'$ contains either all vertices from $C_v$ or none.

    Now, let $C$ be the set of vertices $v \in V$ such that $C_v \subset C'$. By construction of $T'$, we have that any bag in $T$ contains $C$ if and only if the corresponding bag in $T'$ contains $C'$. Since $T_C$ is a nonempty subtree, we find that $T'_{C'}$ must also be a nonempty subtree. As this holds for any $C'$, we conclude that $T'$ is a tree decomposition.

    Additionally, $T'$ respects the coloring: each vertex from $V$ splits into a clique $C_v \subset V'$ which contains the same colors. Hence, each bag in $T$ contains precisely the same colors as the corresponding bag in $T'$ which implies that $T'$ cannot contain a color more than once. This completes the first direction.

    \subsubsection{Second Direction.}
    Let $T'$ be a tree decomposition of $G'$ that respects the coloring. We now construct the tree $T$ as follows: $T$ has the same shape as $T'$, and a bag in $T$ will contain a vertex $v$ if and only if the corresponding bag in $T'$ contains every vertex from $C_v$. We first show that $T$ is a tree decomposition.

    Let a clique $C \subset V$ be given. A bag in $T$ now contains $C$ if and only if the corresponding bag in $T'$ contains $C_v$ for each vertex $v \in C$. This set $C' := \bigcup_{v \in C} C_v$ is a clique: each $C_v$ is a clique itself and for each two cliques $C_v, C_w$ we have that $v$ and $w$ are connected in $G$ (since $C$ is a clique) and consequently that every vertex from $C_v$ is connected to every vertex from $C_w$. Now, the alternate definition of tree decomposition shows that $T'_{C'}$ is a nonempty subtree. This implies that $T_C$ is one as well. Since this holds for any clique $C$, we conclude that $T$ is a tree decomposition.

    Additionally, $T$ respects the coloring: each bag in $T$ corresponds to a bag in $T'$ that contains the same colors (and possibly some more). Since $T'$ respects the coloring, $T$ must do so as well. This completes the second direction.

    We have now shown that the constructed instance of TCG is equivalent to the original instance of TMG. Each vertex in $G$ splits into at most $k$ vertices in $G'$, and each edge splits into at most $k^2$ edges in $G'$. Hence, $G'$ contains at most $nk$ vertices, $mk^2$ edges and $k$ colors. Since this is polynomially many more than the original instance and since the parameter $k$ did not change, the reduction is polynomial. This completes the proof.
    \qed
\end{proof}


\section{Full Proof of XALP-hardness}\label{sec:fullproof}

In this section we will show that the construction given in \autoref{sec:hardness} admits a triangulation if and only if the original TCMIS instance admits a solution. This completes the proof of \autoref{thm:xalphard}.

\subsubsection{First Direction.}
We will first show that if the TMG instance can be triangulated, then there is a solution to the TCMIS instance. Suppose that the TMG instance admits a triangulation. Because of \autoref{prop:triang}\ref{prop:triang:treedecomp} this triangulation corresponds with a tree decomposition.

We first introduce some lemmas on tree decompositions.
\begin{lemma}[{\cite[Proposition~4]{BFW_92}}]\label{lemma:bagpath}
    Let $G$ be a graph and $P$ a path between two vertices $v,w$. In any tree decomposition $T$ of $G$, let $B_1$ and $B_2$ be two bags such that $v \in B_1$ and $w \in B_2$. Now, each bag on the path from $B_1$ from $B_2$ in $T$ contains at least one vertex from $P$.
\end{lemma}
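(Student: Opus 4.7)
The plan is to argue by contradiction using the connectivity property of tree decompositions. Suppose, for a contradiction, that some bag $B^*$ lies on the $T$-path from $B_1$ to $B_2$ and contains no vertex from $P$. Write $P = u_0, u_1, \ldots, u_\ell$ with $u_0 = v$ and $u_\ell = w$, and for each vertex $x$ of $G$ let $T_x$ denote the subtree of bags of $T$ that contain $x$.

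First I would remove the bag $B^*$ from $T$. Since $B^*$ lies strictly on the path between $B_1$ and $B_2$, the tree $T \setminus \{B^*\}$ decomposes into several connected components, and $B_1$ and $B_2$ end up in \emph{different} components (call them $C_1$ and $C_2$). For each vertex $u_i$ of $P$, the subtree $T_{u_i}$ is connected by the third property in \autoref{def:treedecomp}, and by assumption does not contain $B^*$; hence $T_{u_i}$ lies entirely inside one of the components of $T \setminus \{B^*\}$.

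Next I would chain these components together along the edges of $P$. For every edge $(u_i, u_{i+1})$ of the path, there is by definition of tree decomposition at least one bag containing both endpoints, i.e.\ $T_{u_i} \cap T_{u_{i+1}} \neq \emptyset$. Therefore $T_{u_i}$ and $T_{u_{i+1}}$ lie in the same component of $T \setminus \{B^*\}$. Iterating from $i = 0$ up to $i = \ell - 1$, I conclude that $T_v = T_{u_0}$ and $T_w = T_{u_\ell}$ lie in a common component. But $B_1 \in T_v \subseteq C_1$ and $B_2 \in T_w \subseteq C_2$, contradicting $C_1 \neq C_2$. Hence $B^*$ must contain some vertex of $P$, proving the lemma.

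I do not expect any real obstacle here: the result is folklore about tree decompositions, and the only delicate point is the trivial edge case where $B^* = B_1$ or $B^* = B_2$ (in which case $B^*$ already contains $v$ or $w$ respectively), which can be handled by restricting attention to internal bags of the $T$-path before setting up the contradiction.
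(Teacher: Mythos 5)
Your proof is correct and is essentially the standard argument for this folklore fact; the paper itself omits a proof and simply cites \cite{BFW_92}, where the same component-chaining argument (remove the bag, observe each $T_{u_i}$ falls in one component, and link consecutive components via the bags witnessing the edges of $P$) is used. Your handling of the edge case $B^* \in \{B_1, B_2\}$ is also fine.
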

\begin{lemma}\label{lemma:vertexpath}
    Let $G$ be a multicolored graph and $P$ a path between two vertices $v,w$ whose vertices alternate between two colors. In any tree decomposition $T$ of $G$, let $B_1$ and $B_2$ be two bags such that $v \in B_1$ and $w \in B_2$. Now, each vertex from $P$ is contained in at least one bag on the path from $B_1$ from $B_2$ in $T$.
\end{lemma}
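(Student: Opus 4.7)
The plan is to derive a contradiction from the assumption that some $x_i$ lies in no bag on the path from $B_1$ to $B_2$, by exploiting the alternating-color structure of $P$ together with the \emph{properly multicolored} property of $T$ (implicit in the contexts where this lemma is applied, via \autoref{prop:triang}\ref{prop:triang:treedecomp}, which forces every bag to contain each color at most once). I would let $a$ be the largest index below $i$ with $x_a$ in some bag on the main path and $b$ the smallest such index above $i$; both exist since $v \in B_1$ and $w \in B_2$. By maximality the subtrees $T_{x_{a+1}}, \ldots, T_{x_{b-1}}$ lie entirely off the main path, and since consecutive ones intersect (each edge of $P$ sits in a bag) while subtrees in distinct components of $T$ minus the main-path bags cannot intersect, they all live in a single branch attached at one main-path bag, which I call $B^*$.

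The second step is to show $x_a, x_b \in B^*$. Since $T_{x_a}$ is connected, contains a bag on the main path, and contains the bag witnessing the edge $(x_a, x_{a+1})$ (in the branch), it must contain every bag on the path in $T$ between them, in particular $B^*$; the symmetric argument gives $x_b \in B^*$. If $b-a$ is even, then $x_a, x_b$ share a color and $B^*$ already violates proper multicoloring, finishing the proof.

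Otherwise $b-a$ is odd, and I iterate. Let $R^*$ be the bag in the branch adjacent to $B^*$; the same connectedness argument yields $x_a, x_b \in R^*$. For any $j$ with $a < j < b$, the oddness of $b-a$ forces $j$ to share parity with exactly one of $a, b$, so $x_j$ shares its color with one of $x_a, x_b$; if any such $x_j$ lies in $R^*$, a color conflict is immediate. Otherwise none of $T_{x_{a+1}}, \ldots, T_{x_{b-1}}$ contains $R^*$, and the same-branch argument that put them in a common branch of $B^*$ now places them in a single sub-branch of $R^*$, strictly smaller than the previous one. Replacing $B^*$ by $R^*$ and repeating, the active branch shrinks by at least one bag each iteration and must eventually reduce to a single bag, which is then forced to host some $x_j$ with $a < j < b$, yielding the required color conflict with $x_a$ or $x_b$.

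The hard part will be this iteration in the odd case, since the parity constraint on $x_a, x_b$ alone does not force a contradiction at $B^*$: one has to carefully descend into the branch and check at each step that the connectedness argument still places $x_a, x_b$ in the new root bag, that the same-branch argument still collects all $T_{x_j}$'s into one sub-branch, and that strict decrease of the active branch size ultimately forces termination in a bag where the color conflict materializes.
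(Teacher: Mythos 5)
Your argument is correct, but it takes a genuinely different route from the paper's. The paper's proof is short: pick any bag $B_3$ containing the missing vertex $u$, and let $B$ be the bag where the three pairwise paths between $B_1$, $B_2$, $B_3$ in $T$ meet; applying \autoref{lemma:bagpath} to the two halves of $P$ (from $v$ to $u$ and from $u$ to $w$) shows that $B$ contains a vertex of each half, and since $B$ lies on the path from $B_1$ to $B_2$ it omits $u$, so it contains two vertices that are non-adjacent on $P$. The corresponding fill edge closes a cycle covered by the two alternating colors, contradicting \autoref{prop:triang}\ref{prop:triang:alternate}. You instead descend structurally into the branch of $T$ that hosts the subtrees $T_{x_{a+1}},\dots,T_{x_{b-1}}$ of the skipped vertices, using the parity of the alternating coloring to force two same-colored vertices of $P$ into a single bag. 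Your descent does go through: a branch meets the rest of the tree in exactly one edge (else $T$ would have a cycle), so the connectivity of $T_{x_a}$ and $T_{x_b}$ drags $x_a$ and $x_b$ into each successive root bag, the consecutive-intersection argument keeps all the $T_{x_j}$ in a single sub-branch at every stage, and the active branch strictly shrinks while always containing the nonempty subtree $T_{x_{a+1}}$, so termination with a color conflict is guaranteed. What your approach buys is that the contradiction is localized entirely inside the tree decomposition (two same-colored vertices in one bag) without ever passing to the triangulation or invoking \autoref{prop:triang}\ref{prop:triang:alternate}; what it costs is the entire iteration, which the median-bag trick collapses into a single step. Note also that both proofs --- yours explicitly and the paper's implicitly --- use a hypothesis absent from the lemma statement but present in every application, namely that each bag of $T$ contains each color at most once.
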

\begin{proof}
    Omitted. See \autoref{sec:proofs}.
    \qed
\end{proof}

Because of \autoref{prop:zipper-gadget}, each zipper gadget $z_{n,c}$ is triangulated with some offset $i_{n,c}$ between 0 and $r$. We now claim that, for each node $n$ and color $c$, choosing the $i_{n,c}$-th vertex of $S_c$ forms a solution to the TCMIS instance. To show this, we only need to show that we have chosen at most one endpoint for each edge from the TCMIS instance. Let $e_j$ be an edge with endpoints $v_{n_1,c_1,i_1}$ and $v_{n_2,c_2,i_2}$, let $(P_1,Q_1)$ and $(P_2,Q_2)$ be the paths of the corresponding zipper gadgets $z_{n_1,c_1}$ and $z_{n_2,c_2}$, and suppose to the contrary that $z_{n_1,c_1}$ and $z_{n_1,c_1}$ are triangulated with offsets $i_1$ and $i_2$ (respectively). We consider two cases, either $n_1 = n_2$ or they are neighbors in $T$.

In the first case, the head, the middle and the first vertex of tooth $jr$ of the paths $P_1$ and $P_2$ are pairwise identified with each other. We label these vertices as $h$, $m$, and $u$ respectively. Let $B_h$ and $B_m$ be (any) bags that contain $h$ and $m$ (respectively). Note that there are four paths on $G$ between $h$ and $m$: each zipper gadget introduces two paths. Also, each of these paths alternates in color. Because of \autoref{lemma:vertexpath}, there must be a bag $B'$ on the path between $B_h$ and $B_m$ that contains $u$. Because of \autoref{lemma:bagpath}, $B'$ must contain a vertex from both $Q_1$ and $Q_2$, say $v_1$ and $v_2$.

We now claim that $v_1$ and $v_2$ both contain the color $d$. First consider $v_1$, the other case is analogous. Since $P_1$ was triangulated with offset $i_1$, the vertex $u$ must be connected to some vertex from tooth $ir+i_1$ from $Q_1$. Since $u$ has colors $a$ and $c_1$, the vertex from $Q_1$ can not have those colors. That leaves only one option: the first vertex (with colors $b_Q$ and $c_3$). This is precisely the vertex that was given the color $d$ in the construction, which completes the claim.

We now have a contradiction: $B'$ contains two vertices with color $d$ (one from each zipper gadget). This shows that we cannot have chosen both endpoints of $e_j$ and completes the first case.

The second case is almost analogous. We assume without loss of generality that $n_1$ is a parent of $n_2$. We now have that the head, middle, and first vertex of tooth $ir+i_1$ from $z_{n_1,c_1}$ are (respectively) identified with the middle, tail, and first vertex of tooth $mr+1+ir+i_2$ from $z_{n_2,c_2}$. We now apply the same reasoning as in the first case on these vertices. Overall, this completes the first direction.

\subsubsection{Second Direction.}
We will now show that if the TCMIS instance has a solution, then the TMG instance can be triangulated. For each node $n$ and color $c$, the TCMIS solution consists of a choice of some vertex $i_{n,c}$. In the TMG instance, this will correspond to a triangulation of the zipper gadget $z_{n,c}$ with offset $i_{n,c}$. This alone is not enough to obtain a triangulation; we need to add more edges. We will do this by creating a colored tree decomposition, which corresponds to a triangulation because of \autoref{prop:triang}\ref{prop:triang:treedecomp}.

We describe the tree decomposition in three steps. In the first step, we add a bag $B_n$ for each node $n$ of $T'$. This bag contains the shared vertex from all zipper gadgets that start, pass through, or end in $n$. It also contains a vertex of the path $Q$ from each of those zipper gadgets: the first vertex for each zipper gadget that starts in $n$, the last vertex for each zipper gadget that ends in $n$, and the first vertex of tooth $rm+1+\Delta$ (where $\Delta$ is the offset of this zipper gadget) for each zipper gadget that passes through $n$.

In the second step we will add $m-1$ bags between each pair of bags from the previous step. Let $n_1$ and $n_2$ be two adjacent nodes from $T'$ and let $B_1$ and $B_2$ be the corresponding bags from the previous step. Without loss of generality, we assume that $n_2$ is the parent of $n_1$. We now create a sequence of bags $B'_0, B'_1, \ldots, B'_m$ with $B'_0 := B_1$ and $B'_m := B_2$. The path between $B'_{i-1}$ and $B'_i$ will correspond to the edge $i$.

Each bag $B'_i$ will contain the first vertex from the $ir$-th tooth of the path $P$ for every zipper gadget that starts at $B_1$ and passes through $B_2$. Additionally, it will contain the first vertex of the $ir+\Delta$-th tooth of the path $Q$ for those zipper gadgets. Similarly, for zipper gadgets that pass through $B_1$ and end at $B_2$, it will contain the first vertex from the $mr+1+ir$-th tooth of $P$ and the first vertex from the $mr+1+ir+\Delta$-th tooth of $Q$. This completes the second step.

Note that all the bags we have added so far do not contain any color more than once: each two zipper gadgets that share a common start, end, or middle vertex have distinct colors. Furthermore, each bag contains the extra color $d$ at most once: each bag $B'_i$ only contains the color $d$ if $n_1$ contains an endpoint of edge $i$ and if the zipper gadget corresponding to the color of that endpoint was triangulated with the proper offset. That is, if we chose the endpoint. Since the TCMIS instance chooses at most one endpoint for each edge, $B'_i$ contains the extra color $d$ at most once.

Recall that in the construction of the TMG instance, each edge $i$ resulted in merging a vertex from two paths $P$ for the zipper gadgets that correspond to its endpoints. In the bags we have added so far, this merged vertex does not cause a problem: it ends up in the same bag when viewed from both zipper gadgets.

In the third step, we add more bags between two adjacent bags from the previous step. Let $B'_i$ and $B'_{i+1}$ be two adjacent bags that correspond to the edge $i$. Recall that these bags each contain the first vertex of some tooth for some set of zipper gadgets. Also, the vertices between them do not contain any merged vertices; those where all handled in the previous steps.

For each zipper gadget on its own, we could easily complete the tree decomposition between these bags using the tree decompositions that correspond to the triangulations of the zipper gadgets (note that these are paths). For the set of zipper gadgets as a whole, we need some care to avoid adding the color $d$ to the same bag twice. We do this by ``sliding along'' the zipper gadgets one by one: we start with some bags where we partially follow the triangulation of one zipper gadget, then add some bags where follow the triangulation of a second zipper gadget, and so on. We may also slide a bit further along the same zipper gadget multiple times. The exact order is described below.

\begin{itemize}
    \item First, if $B'_i$ contains the color $d$ because of some zipper gadget, slide along that zipper gadget for 1 tooth. This way, the current bag no longer contains the extra color $d$.
    \item Now, loop over every zipper gadget one by one except possibly the one that causes $B'_{i+1}$ to contain the color $d$. For each such zipper gadget, slide along it until we arrive at the tooth contained in bag $B'_{i+1}$. During this, there may have been some bags which contain the color $d$ but it will not be in the current bag when we continue to the next zipper gadget.
    \item Finally, handle the zipper gadget that causes $B'_{i+1}$ to contain the color $d$ (if it exists). We slide over it until we reach the vertex contained in $B'_{i+1}$. Here, the final bag will contain the color $d$.
\end{itemize}
This completes the third step and thus also the construction of the tree decomposition. Because of the arguments given during the construction, each bag contains each color at most once. Since all bags consist of an interlocking of sliding along the zipper gadgets, we have that each vertex is contained in a connected subtree and that the endpoints of each edge are contained in some bag. This proves that the above construction is indeed a tree decomposition. We conclude that a tree decomposition exists, and consequently that the TMG instance has a solution. This completes the second direction.

Overall, we have now shown that the original TCMIS instance admits a solution if and only if the constructed TMG instance does. This completes the proof of \autoref{thm:xalphard}.
\qed


\section{Remaining Proofs}\label{sec:proofs}
This section includes some proofs that were omitted from the main text.

\begin{proof}[of \autoref{prop:triang}]
    We prove the parts in order.
    \begin{enumerate}[label=(\roman*)]
        \item Consider a triangulation of $G$. We use induction on the size of $C$. If $\abs{C} = 3$, then one of the two colors must occur at least twice, hence one of the edges from $C$ connects two vertices sharing a color. If $\abs{C} \geq 4$, then $C$ contains a chord. This chord splits $C$ into two smaller cycles which share the chord as a common edge. Applying the induction hypothesis on one of these cycles shows that $G$ cannot be triangulated into a properly colored graph.
        \item Consider a triangulation of $G$ and suppose to the contrary that both the edge and the chord do not exist. We use induction on the size of $C$. If $\abs{C} = 3$, then $v$'s neighbors are connected. If $\abs{C} \geq 4$, then $C$ contains a chord. This chord cannot connect $v$'s neighbors or have $v$ as an endpoint. Hence, it splits $C$ into two smaller cycles (which share the chord as a common edge) such that both $v$ and its neighbors are contained in one of them. Applying the induction hypothesis on this cycle completes the proof.
        \item For colored graphs, the left implication was shown in~\cite{B_74} and the right implication in~\cite{G_74}. The proof for multicolored graphs is very similar to the colored version and thus omitted.
    \end{enumerate}
    \qed
\end{proof}

\begin{proof}[of \autoref{lemma:exact}]
    The right implication is trivial. We focus on the left implication. Let $T$ be a tree decomposition. If there is a color that is contained in some but not all bags, then there must exist two adjacent bags $B_1$, $B_2$ such that $B_1$ does not contain a vertex with this color and $B_2$ does. We can then add this vertex from $B_2$ to $B_1$ and observe that the result is still a valid tree decomposition where each bag contains each color at most once. By repeating this argument, we must eventually reach a state where every bag contains all colors.
    \qed
\end{proof}

\begin{proof}[of \autoref{lemma:vertexpath}]
    Let $u$ be a vertex from $P$ and suppose to the contrary that no bag on the path from $B_1$ and $B_2$ contains $u$. We split $P$ into two parts: $P_1$ from $v$ to $u$ and $P_2$ from $u$ to $w$. Let $B_3$ be (any) bag that does contain $u$. Because of \autoref{lemma:bagpath}, each bag on the path from $B_1$ to $B_3$ contains some vertex from $P_1$. Analogously, each bag on the path from $B_3$ to $B_2$ contains some vertex from $P_2$. Since $T$ is acyclic, the paths between $B_1$, $B_2$ and $B_3$ must intersect in some point, so there is a bag $B$ which lies on all three paths. This bag then contains a vertex from $P_1$, a vertex from $P_2$, and it does not contain $u$. In particular, it contains two non-adjacent vertices from $P$. That means that, in the corresponding triangulation of $G$, there is an edge between two non-adjacent vertices of $P$. This edge combined with $p$ induces a cycle which alternates between two colors and that is impossible because of \autoref{prop:triang}\ref{prop:triang:alternate}. Hence, we arrive at a contradiction.
    \qed
\end{proof}

\end{document}